\newcommand{\myOmit}[1]{}
\newcommand{\mymax}{\mbox{\rm max}}
\newcommand{\mymin}{\mbox{\rm min}}
\newcommand{\mymedian}{$\mbox{\sc Median}$\xspace}
\newcommand{\gmv}{$\mbox{\sc GenMedian}$\xspace}
\newcommand{\mylrm}{$\mbox{\sc Lrm}$\xspace}
\newcommand{\myleftmost}{$\mbox{\sc Leftmost}$\xspace}
\newcommand{\myrightmost}{$\mbox{\sc Rightmost}$\xspace}
\newcommand{\myem}{{\sc RandEnds}\xspace}
\newcommand{\mylrmtrunc}{\mbox{$\mbox{\sc Lrmt}$}\xspace}
\newcommand{\mymidornearest}{$\mbox{\sc MidOrNearest}$\xspace}
\newcommand{\mymidornearesttwo}{$\mbox{\sc MidOrNearestXY}$\xspace}
\newcommand{\myboundbox}{$\mbox{\sc MinBoundBox}$\xspace}
\newcommand{\myboundboxtrunc}{$\mbox{\sc MinBoundBox}^{*}$\xspace}
\newcommand{\mymediantwo}{$\mbox{\sc MedianXY}$\xspace}
\newcommand{\mylrmormaxmin}{\mbox{\sc LrmP}\xspace}
\newcommand{\myemormaxmintwo}{\mbox{\sc RandEnds2p}\xspace}
\newcommand{\mylrmtruncormaxmin}{\mbox{$\mbox{\sc LrmtP}$\mbox{}}\xspace}
\newcommand{\mymaxmin}{$\mbox{\sc MinMaxP}$\xspace}
\newcommand{\mymaxmintrunc}{$\mbox{\sc MinMaxP}_{\gamma}$\xspace}
\newcommand{\mymaxmintwo}{$\mbox{\sc MinMax2P}$\xspace}
\newcommand{\mymaxmintwotrunc}{$\mbox{\sc MinMax2P}_{\lambda}$\xspace}
\newtheorem{theorem}{Theorem}
\title{Mechanism Design for Facility Location using Predictions}
\author{
    Toby Walsh
    \affiliations
    AI Institute, UNSW Sydney
    \emails
    tw@cse.unsw.edu.au
}
\begin{document}

\maketitle

\begin{abstract}
 We study mechanisms for the facility location
  problem augmented with predictions of the optimal facility location.
  We demonstrate that an egalitarian viewpoint which considers
 {\em both} the maximum distance of any agent from the facility {\em and} the
  minimum utility of any agent provides important new insights
  compared to a viewpoint that just considers the maximum distance. 
  As in previous studies, we consider performance in terms of consistency (worst case 
  when predictions are accurate) and robustness (worst
  case 
  irrespective of the accuracy of predictions).
  \myOmit{
  For instance, in the single facility
  problem, the \mymaxmin\ mechanism
  was previously shown to be 1-consistent and 2-robust with respect to
  the maximum distance, and this is optimal. However, we show here that, while this mechanism is 1-consistent with respect to the optimal minimum utility, it has no bound on its robustness with respect to the minimum utility.}
By considering how mechanisms with predictions
can perform poorly, we design
  new mechanisms that are more robust.
  Indeed, by adjusting parameters, we demonstrate
  how to trade robustness
  for consistency.
  We go beyond the
  single facility problem by
  designing novel strategy proof mechanisms
  for locating two facilities
    with bounded consistency and robustness that
  use two predictions for where to locate the two facilities.
\end{abstract}

\section{Introduction}

In online algorithms, an elegant method to improve (worst-case)
performance is to
provide 
predictions about future inputs.
Such predictions might come from machine learning methods
applied to historical data. 
For example,
a cache scheduler has to decide which pages to evict
from the cache without knowing future requests for page access.
However, we can use machine learning to predict future cache requests,
improving performance
of the cache scheduler when these predictions are accurate.
Recently, researchers have proposed exploiting predictions 
in mechanism design,
arguing that they 
will transform the design and analysis of
mechanisms. 

Most relevant to this work, Agrawal {\it et al.}  \shortcite{abgou22} 
augmented various mechanisms for facility
location with predictions of the optimal location.
 Facility location
 is a classic problem 
 where we decide the location of
 a facility so as to minimize the distance of agents from the
 facility.
 It models a number of collective decision problems
 such as deciding the optimal room temperature for a class room,  the
 maintenance budget for an apartment complex,
 or the best location for a mobile phone tower. 
Our aim is to use predictions of the optimal facility location
 to provide better performance guarantees when predictions are
accurate (consistency) without sacrificing worst-case performance
when they are not (robustness).

We look here in more detail
at the mechanisms 
proposed in \cite{abgou22} that take account
of the predicted optimal location of the facility. 
We demonstrate the 
importance of the precise choice of objective. In particular, we show
that an {\bf egalitarian viewpoint} considering{ {\bf  {\em both} the
maximum distance} of any agent from the facility {\bf {\em and} the minimum
utility} 
provides new insights. 
Considering just the maximum distance focuses on problems where agents are close to facilities and distances are small. To achieve good approximation ratios, a mechanism must return high quality solutions on such problems. This ignores problems which are arguably more challenging where some agents are necessarily some distance from the nearest facility. By also considering
minimum utilities, we consider how well mechanisms
perform when some distances are necessarily large and utilities are small.
Our results also demonstrate the value of {\bf censoring extreme
predictions}. 
Insights from this study (such as the
value of censoring predictions)
could be useful in the
design of mechanisms with predictions in
other application domains such as fair division,
school choice or ad auctions.

\section{Related work}

There is a considerable literature
on augmenting algorithms with predictions to
improve worst-case performance
(see \cite{predictionsurvey} for a survey). 
Several recent surveys
also summarize the considerable literature on 
mechanism design
for facility location \cite{faclocsurvey,ijcai2021-596}. 
Starting with
Procaccia and Tennenholtz
\shortcite{approxmechdesign2}, 
much analysis of strategy proof mechanisms
for facility location
has focused on approximating 
the total and maximum distance that agents must
travel to the nearest facility
\myOmit{For example, 
the \mymedian\ mechanism 
returns the optimal total distance, 
and 2-approximates 
the maximum distance, and no other deterministic and strategy proof
mechanism can do better
\cite{ptacmtec2013}.
As a second example,
Aziz {\it et al.} \cite{acllwaaai2020}
consider strategy proof mechanisms that
minimize the total or maximum distance agents must travel subjet to capacity constraints
on how many agents each facility can serve.
}
(e.g. \cite{ft2010,egktps2011,proportional,ft2013,zhang2014,flplimit2,acllwaaai2020,coordmedian}). 
Indeed, one  recent survey
\cite{ijcai2021-596} describes
the design of strategy proof mechanisms which approximate
well the total or maximum distance that agents travel
as the ``classic setting'' for approximate mechanism design. 

However, some recent work on approximate mechanism
design for facility location has started to
consider other objectives
such as the utility of agents as this can uncover
fresh insight. For example,
Walsh \shortcite{wpricai21,wecai2024}
has looked at strategy proof mechanisms for facility location
optimizing both the maximum distance
and the minimum utility of agents.
As a second example,
Han {\it et al.} \shortcite{DBLP:conf/aaai/HanJA23}
look to optimize several objectives 
from the $l$-centrum family of metrics
(which includes total and maximum distance)
simultaneously. As a third example,
Aziz {\it et al.} \shortcite{propwine2022} identified
srategy proof mechanism that satisfy
proportional fairness which is a normative condition on
the utility of agents.
As a fourth example, Mei {\it et al.} \shortcite{flpdesire}
consider strategy proof mechanisms maximizing 
a normalized utility of agents called ``happiness''

For the online version of the facility location problem,
Jiang {\it et al.} \shortcite{onlineflp}
study online algorithms guided by predictions.
Such online algorithms must irrevocably assign each agent to an open
facility upon its arrival
or must decide to open a new facility (at cost) to which to assign it.
They provide a near-optimal online algorithm that
offers a smooth tradeoff between the prediction error and the
competitive ratio.
Here, by comparison, we do not have to make online decisions but
suppose the mechanism has access to location data for all agents
simultaneously. Almanza {\it et al.} \shortcite{onlineflpm} and
Fotakis {\it et al.} \shortcite{onlineflpf} 
also look at the online version of the facility location problem
with predictions, and propose algorithms with good competitive
ratios. 

For obnoxious facility location, where the goal of agents is to be as
far away from the facility 
as possible,
Istrate and Bonchis \shortcite{oflp} study mechanism design with predictions. They
present strategy proof mechanisms that
explore the tradeoff between robustness and consistency on
various metric spaces such as intervals, squares, circles, trees and hypercubes.

\section{Formal background}

In a facility location problem, we need to decide where to locate a facility
to serve a set of agents. 
We  consider $n$ agents located at $x_1$ to $x_n$.
We assume without loss of generality that $x_1 \leq \ldots \leq x_n$. 
A mechanism $f$ locates the facility at $y$. 
Formally, $f(x_1,\ldots,x_n) = y$. 
We let $d_i$ be the distance of agent $i$ to the facility:
$d_i =  | x_i - y|$.
As in a number of previous studies mentioned earlier, we assume that agents and facilities
are on the interval
$[0,1]$, and the utility of agent $i$ is $1-d_i$.
The interval could be $[a,b]$ supposing we normalise by
$b-a$. Other utility functions such as those
based on inverse square distance would be interesting
and are subject of our future work. We start here, however, 
with one of the simplest possible utility functions as it has been used
in prior work. 
%

Having agents and facilities lie on
an
interval 
is both practically and theoretically interesting. 
In practice, agents and facilities can be limited 
by physical constraints. 
For example,
when locating charging stations in 
a factory, robots and
charging stations might be limited to the factory.
As a second example,
when setting a thermostat, we
are limited by the boiler. 
As a third example, when locating a distribution centre, 
the centre might have to be on the fixed road network.
Thus many settings require locations
to be limited to an interval.
Restricting agents to
an interval
also limits the extent to which
agents can
misreport their location to gain advantage.
A fixed interval has been used in several 
recent studies (e.g. \cite{abs-2111-01566,flprevisit}). 

Our focus is on egalitarian mechanisms that look to minimize
the maximum distance any agent must travel or, equivalently,
to maximize the minimum
utility of any agent. When we consider approximation ratios of the
optimal solution, the
utility and distance viewpoints offer different insights.
Indeed, we will show
that the viewpoint of the
minimum utility of any agent provides an alternative
but useful
perspective that is complementary to that provided by maximum
distance. 

We consider mechanisms 
with good normative properties. One such property is
unanimity. 
A mechanism is {\em unanimous} iff the facility is located where all
agents agree. Formally $f$ is unanimous iff for any
$x$, we have $f(x,\ldots,x)=x$.
A simple fairness property is anonymity.
A mechanism is {\em anonymous} iff permuting the 
agents
does not change the outcome. Formally $f$ is anonymous iff for any
permutation $\sigma$, we have $f(x_{\sigma(1)}, \ldots,x_{\sigma(n)})=f(x_1,\ldots,x_n)$.
A mechanism is {\em Pareto efficient} iff we cannot
move the facility location to make one agent better
off without hurting other agents.
Formally $f$ is Pareto efficient iff
for any 
$x_j, \ldots, x_n$, there does not
exist a location $z$ and agent $i$
with 
$|x_i-z|  < |x_i-f(x_1,\ldots,x_n)|$
and $|x_j-z|  \leq |x_j-f(x_1,\ldots,x_n)|$ for all $j \in [1,n]$. 
Another important property is
resistance to manipulation. A mechanism is {\em strategy proof} 
iff no agent can mis-report their location and reduce their
distance to
the nearest facility.
Formally $f$ is strategy proof iff for any 
$x_1,\ldots,x_n$, and any agent $i$, it is not the
case that there exists $x_i'$ with
$|x_i - f (x_1,\ldots,x_i',\ldots,x_n)| < |x_i-f(x_1,\ldots,x_i,\ldots,x_n)|$. 
We will consider how well strategy proof mechanisms
approximate an objective like the optimal maximum distance
or minimum utility. A mechanism has an {\em approximation ratio} $\rho$ 
for a maximization (minimization) objective
iff the answer returned
is at least $\sfrac{1}{\rho}$ (at most $\rho$) times the optimal. 

We consider a number of strategy proof mechanisms. %
Many are based on the 
function $median(z_1, \ldots, z_p)$ 
which returns $z_i$ where $|\{ j | z_j < z_i\}| < \lceil \sfrac{p}{2} \rceil$
and $|\{ j | z_j > z_i\}| \leq \lfloor \sfrac{p}{2} \rfloor$. 
For example, the \gmv\ mechanism locates a facility at
$median(x_1, \ldots, x_n, z_1, \ldots, z_{n-1})$
where the $n-1$ parameters
$z_1$ to $z_{n-1}$ are ``phantom'' agents at fixed locations.
Moulin  \shortcite{moulin1980} proved that
a mechanism 
is anonymous, Pareto efficient and strategy-proof
iff it is 
\gmv. 
The \myleftmost\ 
mechanism is an instance of \gmv\ with
$z_i = 0$ for $i \in [1,n)$, locating the facility
at the leftmost
agent. 
The \myrightmost\
mechanism is an instance of \gmv\ with
$z_i=1$ for $i \in [1,n)$, locating the facility
at the rightmost agent. 
The \mymedian\
mechanism is an instance of \gmv\ with 
$z_i = 0$ for $i \leq \lfloor
\sfrac{n}{2} \rfloor$ and $1$ otherwise,
locating the facility at the median agent. 
The \mymidornearest\ mechanism is an instance of \gmv\ with 
$z_i = \sfrac{1}{2}$ for $i \in [1,n)$. 
It locates the facility either 
at $\sfrac{1}{2}$
if $x_1 \leq \sfrac{1}{2} \leq
x_n$, otherwise at the agent
nearest to $\sfrac{1}{2}$. 

In \cite{abgou22}, mechanisms for facility location 
are augmented with a prediction $\pi$ of the optimal facility location.
For example, if $x_1$ and $x_n$ are the minimum and maximum reported locations
of the agents, the prediction augmented mechanism $\mymaxmin(x_1,x_n,\pi)$ returns 
the predicted solution
$\pi$ as facility location when $x_1 \leq \pi \leq x_n$, otherwise
it returns $x_1$ when $\pi< x_1$,
and $x_n$ when $\pi > x_n$.
In general, our goal is for predictions
to improve the performance 
when accurate and not to hinder
performance when inaccurate.
A mechanism with prediction is {\em $\alpha$-consistent}
with respect to
maximum distance/minimum utility
iff, when the prediction is correct,
the mechanism has an approximation ratio of $\alpha$ or better
with respect to the objective of
maximum distance/minimum utility. 
A mechanism with prediction is {\em $\beta$-robust} with respect to
 maximum distance/minimum utility
iff, irrespective of the quality of
the prediction, 
the mechanism has an approximation ratio of $\beta$ or better
with respect to the objective of
maximum distance/minimum utility. 
The \mymaxmin\ mechanism
is strategy proof and, 
with respect to maximum distance, is 
1-consistent 
and 2-robust 
 (i.e. returns the optimal maximum distance when the prediction is
 correct, and 2-approximates it otherwise)
\cite{abgou22}.

\section{Single facility}

We begin our study with the simplest setting
where we locate a single facility on the interval $[0,1]$.
You might think, based on the analysis 
of Agrawal {\it et al.}  \shortcite{abgou22} considering
approximating the maximum distance, that the \mymaxmin\ mechanism 
was optimal and all that could be usefully said
about strategy proof mechanisms that exploit predictions. 
This mechanism 
has optimal consistency and robustness with respect to the
maximum distance. 
Clearly no mechanism can do better than
1-consistency, while no deterministic and strategy proof mechanism can be
better than 2-robustness (Procaccia and Tennenholts
\shortcite{ptacmtec2013}
demonstrate this for the real line but the result
easily extends to any fixed interval).
%
%

An analysis of minimum utilities shows that there is more to uncover about egalitarian
mechanisms exploiting predictions.
Consider the 
subtly different egalitarian
objective of the minimum utility, and the
approximation ratios that can be achieved of
this objective. 
The \mymaxmin\ mechanism
is {\bf far from optimal from this perspective}. 
In fact, there is no bound on
how badly it approximates the 
minimum utility.

\begin{theorem}
  The \mymaxmin\ mechanism is 1-consistent with respect to the optimal
  minimum utility, but has no bound on its robustness. 
\end{theorem}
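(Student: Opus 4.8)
The plan is to handle the two claims separately. For \textbf{consistency}, recall that on an interval the egalitarian optimum for $n$ agents located at $x_1 \le \ldots \le x_n$ is attained uniquely at the midpoint $m = (x_1+x_n)/2$ of the extreme agents, where the maximum distance equals $(x_n-x_1)/2$ and hence the minimum utility equals $1-(x_n-x_1)/2$. A correct prediction therefore satisfies $\pi = m$, and since $x_1 \le m \le x_n$ the \mymaxmin\ mechanism returns $y = \pi = m$. So when the prediction is correct the mechanism attains exactly the optimal minimum utility, which is $1$-consistency.

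For the \textbf{absence of a robustness bound}, I would exhibit a one-parameter family of instances on which the approximation ratio for minimum utility grows without bound. Take two agents at $x_1 = 0$ and $x_2 = 1$, and an adversarial (hence permitted) prediction $\pi = \varepsilon$ for small $\varepsilon > 0$. Since $0 \le \varepsilon \le 1$, the mechanism returns $y = \varepsilon$; the agent at $1$ is then at distance $1-\varepsilon$, so the minimum utility achieved is $\varepsilon$. The optimal facility location is $\sfrac{1}{2}$, giving optimal minimum utility $\sfrac{1}{2}$. Hence the approximation ratio on this instance is $\sfrac{(\sfrac{1}{2})}{\varepsilon} = \sfrac{1}{(2\varepsilon)}$, which exceeds any fixed $\beta$ as soon as $\varepsilon < \sfrac{1}{(2\beta)}$. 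Thus no finite $\beta$ makes \mymaxmin\ $\beta$-robust with respect to minimum utility.

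I do not expect any real obstacle here; the only points requiring a little care are that the witness instances lie inside $[0,1]$ and that the mechanism really outputs the bad point --- in the instance above the prediction already lies in $[x_1,x_2]$, so no clamping is invoked, but one could equivalently place the agents in $[\delta,1]$ and use a prediction $\pi < \delta$ to force the clamp to $x_1 = \delta$. It is worth emphasising in the write-up \emph{why} this does not contradict the $2$-robustness of the same mechanism for maximum distance: a prediction that is only mildly wrong in absolute distance can still drive some agent's utility close to $0$, and ratios blow up near the bottom of the utility scale even though they remain bounded when viewed as distance ratios.
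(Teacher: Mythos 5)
Your proof is correct and follows essentially the same route as the paper: consistency because the accurate prediction is the midpoint $(x_1+x_n)/2$, which lies in $[x_1,x_n]$ and is therefore returned, and unboundedness via agents at $0$ and $1$ with a prediction near $0$. The only cosmetic difference is that the paper takes the single limiting instance $\pi=0$ (minimum utility exactly zero), whereas you use the family $\pi=\varepsilon\to 0$; both establish the same conclusion.
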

\begin{proof}
If the prediction is accurate then, as the mechanism is strategy
proof, $x_1 \leq \pi = \frac{x_1+x_n}{2} \leq x_n$. The facility
is therefore located at this accurate prediction, and the mechanism is 1-consistent.
For robustness, suppose $\pi=x_1=0$ and $x_n=1$.
Then the facility is located at 0, giving a minimum utility of
zero. However, the optimal minimum utility is $\sfrac{1}{2}$ with
the facility at $\sfrac{1}{2}$. Hence
robustness is unbounded. 
\end{proof}

Considering the approximation ratio of the optimal maximum
distance focuses attention
on problem instances where distances are small and all agents are
necessarily
close to the facility location. It ignores those more challenging
problem
instances where distances are large and some agents are
necessarily far from the facility location.
Unfortunately the \mymaxmin\ mechanism
may approximate poorly certain instances in which agents must
travel large distances. 

We compare this lack of robustness with the simple
strategy proof \mymidornearest\ mechanism.
This has consistency and robustness that is bounded with respect
to both maximum distance and minimum utility. 

\begin{theorem}
  The \mymidornearest\ mechanism is 
  $\sfrac{3}{2}$-consistent
  and $\sfrac{3}{2}$-robust with respect to the optimal
  minimum utility.
  It is
   ${2}$-consistent
   and ${2}$-robust with respect to the optimal
   maximum distance. 
\end{theorem}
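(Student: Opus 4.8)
The plan is to note first that \mymidornearest\ never consults the prediction, so for each objective its consistency, its robustness, and its unconditional approximation ratio all coincide; it therefore suffices to show that the approximation ratio with respect to maximum distance is at most $2$ and with respect to minimum utility is at most $\sfrac{3}{2}$, and then to exhibit a single instance (with an accurate prediction) on which both bounds are met. Write $L=x_n-x_1$. Maximum distance is minimized --- and hence minimum utility maximized --- by placing the facility at the midpoint $\frac{x_1+x_n}{2}$, so the optimal maximum distance is $\frac{L}{2}$ and the optimal minimum utility is $1-\frac{L}{2}$. If the mechanism outputs $y$ with induced maximum distance $D=\max(y-x_1,\,x_n-y)$, then ``$2$-approximating maximum distance'' is just the inequality $D\le L$, while rewriting the maximization guarantee $1-D\ge\tfrac{2}{3}\left(1-\tfrac{L}{2}\right)$ shows that ``$\sfrac{3}{2}$-approximating minimum utility'' is the inequality $D\le\frac{1+L}{3}$. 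The proof thus reduces to verifying these two inequalities on $D$ for every instance.

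I would cut the number of cases in half using the reflection $x\mapsto 1-x$: it maps $[0,1]$ onto itself, fixes $\sfrac12$, and therefore commutes with \mymidornearest, while it sends $(x_1,x_n)$ to $(1-x_n,1-x_1)$ and leaves both $L$ and $D$ unchanged. Two cases remain. In the case $x_1\le\sfrac12\le x_n$ the mechanism outputs $\sfrac12$, and after possibly reflecting we may take $D=x_n-\sfrac12$; then $D\le L$ is equivalent to $x_1\le\sfrac12$, and $D\le\frac{1+L}{3}$ is equivalent to $2x_n+x_1\le\sfrac52$, which holds because $x_n\le1$ and $x_1\le\sfrac12$. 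In the case $\sfrac12\notin[x_1,x_n]$, after possibly reflecting we have $x_n<\sfrac12$, the mechanism outputs the nearest agent $x_n$, and $D=L<\sfrac12$; here $D\le L$ is trivial and $D\le\frac{1+L}{3}$ reduces to $2L\le1$. (The coincident-agent instances $x_1=x_n$ fall under these cases with $D=0$, where the mechanism is exact.) This gives both upper bounds.

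For tightness, take $x_1=\sfrac12$, $x_n=1$ and set the prediction to the true optimum $\sfrac34$, so that the consistency bounds are genuinely exercised: \mymidornearest\ places the facility at $\sfrac12$, yielding maximum distance $\sfrac12$ and minimum utility $\sfrac12$, whereas the optimal facility at $\sfrac34$ has maximum distance $\sfrac14$ and minimum utility $\sfrac34$, for ratios of exactly $2$ and $\sfrac32$. I do not anticipate a real obstacle; the points that need care are translating the minimum-utility approximation guarantee into the clean linear bound $D\le\frac{1+L}{3}$, making the two ``after possibly reflecting'' steps fully rigorous via the symmetry above, and confirming that the worst instance is compatible with an accurate prediction so that the \emph{consistency} figures --- not just the robustness figures --- are tight.
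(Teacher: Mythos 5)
Your proof is correct. It opens with the same observation as the paper --- \mymidornearest\ ignores the prediction, so consistency and robustness coincide with the plain approximation ratio --- but from there the routes differ: the paper simply delegates both ratios to Theorem 1 of the cited prior work (\cite{wecai2024}), whereas you prove them from scratch. Your reduction is clean and checks out: with $L=x_n-x_1$ and $D$ the realized maximum distance, the two guarantees become the linear inequalities $D\le L$ and $D\le\frac{1+L}{3}$ (the latter being exactly the rewriting of $1-D\ge\frac{2}{3}(1-\frac{L}{2})$), the reflection $x\mapsto 1-x$ genuinely commutes with \mymidornearest\ since it fixes $\sfrac{1}{2}$, and both inequalities hold in the two remaining cases ($2x_n+x_1\le\sfrac{5}{2}$ when the facility is at $\sfrac{1}{2}$, and $2L\le 1$ when all agents lie on one side of $\sfrac{1}{2}$). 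What your approach buys is a self-contained verification that does not rely on the external theorem, plus an explicit tight instance (agents at $\sfrac{1}{2}$ and $1$ with an accurate prediction of $\sfrac{3}{4}$) showing the stated constants are exact --- more than the theorem strictly requires, since it only asserts upper bounds, but a useful addition; what the paper's citation buys is brevity.
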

\begin{proof}
  \mymidornearest\
  ignores the prediction so consistency is the same
as robustness. 
Theorem 1 in \cite{wecai2024} demonstrates
that the
mechanism
$\sfrac{3}{2}$-approximates the 
minimum utility
and $2$-approximates the 
maximum distance. 
\end{proof}

Note that no deterministic
and strategy proof mechanism can do
better than
$\sfrac{3}{2}$-robustness with respect to the 
minimum utility \cite{wecai2024}, 
or 2-robustness with
respect to the maximum distance \cite{ptacmtec2013}.
The \mymidornearest\ mechanism
is actually an instance of the \mymaxmin\ mechanism when the predicted
optimal location is $\sfrac{1}{2}$. It is extreme predictions away
from $\sfrac{1}{2}$ that lead to the lack of robustness of the
\mymaxmin\ mechanism.
\myOmit{Note that the even simpler \mymedian\ mechanism achieves 2-consistency and
2-robustness
with respect to the maximum distance but has an unbounded
consistency and robustness with respect to minimum utility. 
}

Of course,
the \mymidornearest\ mechanism is not exploiting any information about
the predicted optimal facility location. 
Mechanisms
which
are responsive to the predicted facility location 
can do better. 
However, to get good (bounded) robustness with respect to the
minimum utility, we must avoid extreme predictions
near the interval end points.
We propose next a {\bf new mechanism} guided
by {\bf non-extreme predictions} that has bounded robustness.

The \mymaxmintrunc\ mechanism is a truncated
version of the \mymaxmin\ mechanism with a parameter
$\gamma \in [0,\sfrac{1}{2}]$. It maps the prediction $\pi$
onto $\mymax(\gamma,\mymin(\pi,1-\gamma))$, and then
applies the \mymaxmin\ mechanism to this truncated prediction.
This mapping limits predictions to 
the interval $[\gamma,1-\gamma]$.
The \mymaxmintrunc\ mechanism is 
the \mymaxmin\ mechanism when $\gamma=0$,
and the \mymidornearest\ mechanism when $\gamma = \sfrac{1}{2}$.
For $0 < \gamma < \sfrac{1}{2}$, it is a synthesis
of the two mechanisms.
smoothly interpolating
between \mymaxmin\  (which has
optimal consistency) and
 \mymidornearest\  (which, as
argued shortly, has optimal robustness). 

\begin{theorem}
  For  $\gamma \in [0,\sfrac{1}{2}]$,
  the \mymaxmintrunc\ mechanism is strategy proof,
$\frac{(2-\gamma)}{(2-2\gamma)}$-consistent
  and $\frac{(1+\gamma)}{2\gamma}$-robust
  with respect to the optimal
  minimum utility.
  It is ${1}$-consistent with respect to the optimal maximum
   distance when $\gamma = 0$,
   but $2$-consistent when $\gamma>0$.
   It is always $2$-robust with respect to the optimal maximum
   distance. 
   \end{theorem}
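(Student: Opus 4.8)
The plan is to reduce everything about \emph{maximum distance} and strategy proofness to the already-established properties of \mymaxmin, and then prove the two \emph{minimum-utility} bounds directly by a case analysis on where the truncated prediction $\pi'=\mymax(\gamma,\mymin(\pi,1-\gamma))$ sits relative to $[x_1,x_n]$. Write $L=x_1$, $R=x_n$; the optimum for both egalitarian objectives places the facility at $\frac{L+R}{2}$, with maximum distance $\frac{R-L}{2}$ and minimum utility $1-\frac{R-L}{2}$, while any facility $y$ yields maximum distance $\mymax(|y-L|,|y-R|)$ and minimum utility $1-\mymax(|y-L|,|y-R|)$. Since $\pi\mapsto\pi'$ uses only the exogenous prediction and $\gamma$, never the reported locations, \mymaxmintrunc\ with prediction $\pi$ is exactly \mymaxmin\ with prediction $\pi'$; as the latter is strategy proof for every fixed prediction value~\cite{abgou22}, so is \mymaxmintrunc. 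For maximum distance: when $\gamma=0$ there is no truncation, so \mymaxmintrunc $=$ \mymaxmin, which is $1$-consistent and $2$-robust~\cite{abgou22}; when $\gamma>0$, every (profile, effective prediction) pair realised by \mymaxmintrunc\ is also realised by \mymaxmin\ (the effective prediction merely ranges over $[\gamma,1-\gamma]\subseteq[0,1]$), so $2$-robustness is inherited, and for $2$-consistency one checks that with $\pi=\frac{L+R}{2}$ the facility $y=median(L,\pi',R)$ always has $\mymax(|y-L|,|y-R|)\le R-L$, i.e.\ twice the optimum, with the instance $L=0$, $R=\gamma$ (so $\pi'=\gamma=R$, $y=R$) witnessing that $2$ cannot be improved.

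For minimum-utility consistency, take $\pi=\frac{L+R}{2}$. If $\frac{L+R}{2}\in[\gamma,1-\gamma]$ then $\pi'=\frac{L+R}{2}\in[L,R]$ and the facility is exactly optimal. If $\frac{L+R}{2}<\gamma$ (the case $>1-\gamma$ is the mirror image under $x\mapsto1-x$, a symmetry respected by the mechanism, the objective, and the truncation), then $\pi'=\gamma$ and either $R\ge\gamma$, so $y=\gamma$ and $\frac{L+R}{2}<\gamma$ forces $L$ to be the farther extreme, giving minimum utility $1-(\gamma-L)$; or $R<\gamma$, so $y=R$, giving minimum utility $1-(R-L)$. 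In each subcase one maximises $\frac{\mathrm{OPT}}{\mathrm{ALG}}$ over the region defined by the case hypotheses; the maximum is attained at $L=0$, $R=\gamma$, where it equals $\frac{1-\gamma/2}{1-\gamma}=\frac{2-\gamma}{2-2\gamma}$, which is also the tight instance.

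The substantive part is minimum-utility robustness $\frac{1+\gamma}{2\gamma}$ (for $\gamma=0$ this reads ``unbounded'', i.e.\ Theorem~1). Fix a profile $(L,R)$; the adversary, picking any $\pi'\in[\gamma,1-\gamma]$, produces facility $y=median(L,\pi',R)$, which equals $L$ or $R$ — hence maximum distance $R-L$ — exactly when $\pi'\notin(L,R)$, and otherwise equals $\pi'$ with maximum distance $\mymax(\pi'-L,R-\pi')<R-L$. Split on whether $[\gamma,1-\gamma]$ meets $[0,L]\cup[R,1]$. Case (i): $L\ge\gamma$ or $R\le1-\gamma$. Then the adversary can force maximum distance $R-L$; but these same hypotheses force $R-L\le1-\gamma$, whence $\frac{1-(R-L)/2}{1-(R-L)}\le\frac{1+\gamma}{2\gamma}$, with equality at $(L,R)=(0,1-\gamma)$. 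Case (ii): $L<\gamma$ and $R>1-\gamma$. Then every feasible $\pi'$ lies in $(L,R)$, so $y=\pi'$; the worst $\pi'$ is an endpoint of $[\gamma,1-\gamma]$ by convexity of $\pi'\mapsto\mymax(\pi'-L,R-\pi')$, and the resulting minimum utility is $\gamma+\mymin(L,1-R)$; assuming $L\le1-R$ without loss of generality (using the reflection symmetry again), the required inequality $\frac{1-(R-L)/2}{\gamma+L}\le\frac{1+\gamma}{2\gamma}$ rearranges to $\gamma(1-R-\gamma)\le L$, which holds because $R>1-\gamma$ makes the left side negative. Combining the cases gives $\frac{1+\gamma}{2\gamma}$-robustness, tight by Case (i) with any prediction $\pi\ge1-\gamma$.

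The main obstacle is this last case analysis: one must pin down precisely when the truncated prediction can escape $[x_1,x_n]$ — forcing the facility onto an endpoint agent, which is exactly the worst behaviour — and then carry out the two small constrained optimisations without losing the sharp constant, using the $x\mapsto1-x$ symmetry to halve the work. Everything touching maximum distance and strategy proofness is routine bookkeeping on top of the cited properties of \mymaxmin.
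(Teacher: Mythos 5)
Your proof is correct, and while it shares the paper's skeleton---strategy proofness inherited from \mymaxmin, a case analysis driven by where the truncated prediction $\pi'$ falls relative to $[x_1,x_n]$, and the same tight instances ($x_1=0$, $x_n=\gamma$, $\pi=\sfrac{\gamma}{2}$ for consistency; $x_1=0$, $x_n=1-\gamma$, $\pi\geq 1-\gamma$ for robustness)---it is organised differently in two places where it is genuinely leaner. For the maximum-distance bounds the paper runs five-case and four-case analyses, whereas you observe once that the facility always lies in $[x_1,x_n]$ and hence within $x_n-x_1$ of every agent, which delivers $2$-consistency and $2$-robustness in one stroke (with the $L=0$, $R=\gamma$ instance showing the constant $2$ is exact for $\gamma>0$). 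For minimum-utility robustness the paper's four cases are indexed by the position of $\pi'$ and the constrained maxima are merely asserted; you instead fix the profile, let the adversary range over $\pi'\in[\gamma,1-\gamma]$, and split on whether the prediction can escape $[L,R]$: when it can, the worst outcome is an endpoint with $R-L\leq 1-\gamma$, and when it is trapped ($L<\gamma$, $R>1-\gamma$) you close the case with the explicit inequality $\gamma(1-R-\gamma)\leq L$, a cleaner certificate than the paper provides for its corresponding ``interior $\pi'$'' cases. Your consistency argument parallels the paper's (parametrised by $(L,R)$ rather than by $(\pi,b)$), and its final two-variable maximisation, though stated rather than computed, checks out since $(1-\sfrac{t}{2})/(1-t)$ is increasing in $t$, putting the maximum at $L=0$, $R=\gamma$. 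The paper's approach buys a uniform mechanical template reused across its other theorems; yours buys shorter, verifiable closed-form checks and makes explicit that the worst behaviour is exactly the prediction forcing the facility onto an extreme agent.
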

\begin{proof}
Strategy proofness is immediate from that of the untruncated
mechanism.
With respect to the optimal minimum utility,
suppose the prediction $\pi$ is correct.
There are five cases. In the first case, $\pi \leq
\sfrac{\gamma}{2}$. Let the minimum utility be $1-b$
with $b \leq \pi$.
The 
mechanism
locates the facility at $\pi+b$ giving a
minimum utility of $1-2b$.
The approximation ratio is thus $\frac{(1-b)}{(1-2b)}$.
This has a maximum of $\frac{(2-\gamma)}{(2-2\gamma)}$
when $\pi = b = \sfrac{\gamma}{2}$. 
In the second case, $\sfrac{\gamma}{2} \leq \pi \leq \gamma$ and
the minimum utility is $1-b$
with $b \leq \gamma - \pi$. 
The 
mechanism
locates the facility at $\pi+b$ giving a
minimum utility of $1-2b$.
The approximation ratio is thus $\frac{(1-b)}{(1-2b)}$.
This again has a maximum of $\frac{(2-\gamma)}{(2-2\gamma)}$
when $\pi = b = \sfrac{\gamma}{2}$. 
In the third case, $\sfrac{\gamma}{2} \leq \pi \leq \gamma$ and
the minimum utility is $1-b$
with $\pi \geq b \geq \gamma - \pi$. 
The 
mechanism
locates the facility at $\gamma$ giving a
minimum utility of $1-(\gamma-(\pi-b))$.
The approximation ratio is thus $\frac{(1-b)}{(1-b-\gamma+\pi)}  $.
This again has a maximum of $\frac{(2-\gamma)}{(2-2\gamma)}$
when $\pi = b = \sfrac{\gamma}{2}$. 
In the fourth case, $\gamma \leq \pi \leq 1-\gamma$.
The 
mechanism
locates the facility at $\pi$ giving the optimal minimum
utility and an approximation ratio of 1.
In the fifth case, $\pi \geq 1-\gamma$. This is symmetric
to the first three cases.
Over the five cases, the largest approximation ratio is
$\frac{(2-\gamma)}{(2-2\gamma)}$. 

Now 
suppose the prediction
is incorrect.
There are four cases.
In the first case, $x_n$ is less than or equal to 
the truncated prediction $\pi'$.
The optimal minimum utility is $1-\frac{(x_n-x_1)}{2}$.
However, the 
mechanism locates
the facility at $x_n$ giving a minimum utility of $1-(x_n-x_1)$.
The approximation ratio is therefore
$\frac{(2-x_n+x_1)}{2(1-x_n+x_1)}$.
This is maximized for $x_1=0$ and $x_n=1-\gamma$
when the ratio is $\frac{(1+\gamma)}{2\gamma}$. 
In the second case, 
$\pi'$ is between
$x_1$ and $x_n$ or equal to $x_1$,
and nearer to $x_1$ than $x_n$.
The optimal minimum utility is
again $1-\frac{(x_n-x_1)}{2}$.
However, the 
mechanism locates
the facility at 
$\pi'$. 
The minimum utility is $1-(x_n-\pi')$.
The approximation ratio is therefore
$\frac{(2-x_n+x_1)}{2(1-x_n+\pi')}$.
This is maximized for $x_1=0$, $x_n=1-\gamma$
and $\pi'=\gamma$ 
when the ratio is $\frac{(1+\gamma)}{2\gamma}$. 
In the third case, 
$\pi'$ is between
$x_1$ and $x_n$ or equal to $x_n$, and not
nearer to $x_1$ then $x_n$.
This is symmetric to the second case. 
In the fourth
case, 
$\pi'$ is greater than $x_n$.
This is symmetric to the first case.
Over the four cases, the largest approximation ratio is
$\frac{(1+\gamma)}{2\gamma}$. 

With respect to the optimal maximum distance,
suppose the prediction $\pi$ is correct. For $\gamma=0$,
the \mymaxmintrunc\ mechanism is equivalent to
\mymaxmin\ mechanism which is $1$-consistent and $2$-robust.
For $\gamma>0$ there are five cases. In the first case, $\pi \leq
\sfrac{\gamma}{2}$. Let the maximum distance be $b$
with $b \leq \pi$.
The 
mechanism
locates the facility at $\pi+b$ giving a
maximum distance of $2b$.
The approximation ratio is thus $2$.
In the second case, $\sfrac{\gamma}{2} \leq \pi \leq \gamma$ and
the maximum distance is $b$
with $b \leq \gamma - \pi$. 
The 
mechanism
locates the facility at $\pi+b$ giving a
maximum distance of $2b$.
The approximation ratio is thus again $2$.
In the third case, $\sfrac{\gamma}{2} \leq \pi \leq \gamma$ and
the maximum distance is $b$
with $\pi \geq b \geq \gamma - \pi$. 
The 
mechanism
locates the facility at $\gamma$ giving a
maximum distance of $(\gamma-(\pi-b))$.
The approximation ratio is thus $\frac{(b+\gamma-\pi)}{b}$.
This has a maximum of $2$ 
when $\pi = b = \sfrac{\gamma}{2}$. 
In the fourth case, $\gamma \leq \pi \leq 1-\gamma$.
The 
mechanism
locates the facility at $\pi$ giving the optimal maximum
distance and an approximation ratio of 1.
In the fifth case, $\pi \geq 1-\gamma$. This is symmetric
to the first three cases.
Over the five cases, the largest approximation ratio is
$2$. 

Now suppose again that the prediction
is incorrect.
There are four cases.
In the first case, $x_n$ is less than or equal to the
truncated prediction $\pi'$. 
The optimal maximum distance is $\frac{(x_n-x_1)}{2}$.
However, the 
mechanism locates
the facility at $x_n$ giving a maximum distance of $(x_n-x_1)$.
The approximation ratio is therefore
$2$. 
In the second case, 
$\pi'$ is between
$x_1$ and $x_n$ or equal to $x_1$,
and nearer to $x_1$ than $x_n$.
The optimal maximum distance is
again $\frac{(x_n-x_1)}{2}$.
However, the 
mechanism locates
the facility at the truncated prediction $\pi'$. 
The maximum distance is $x_n-\pi'$.
The approximation ratio is therefore
$\frac{2(x_n-\pi')}{(x_n-x_1)}$.
This is maximized for $x_1=\gamma$, $x_n=1$ 
and $\pi'=\gamma$ when the ratio is $2$. 
In the third case, 
$\pi'$ is between
$x_1$ and $x_n$ or equal to $x_n$, and not
nearer to $x_1$ than $x_n$.
This is symmetric to the second case. 
In the fourth
case, 
$\pi'$ is greater than $x_n$.
This is symmetric to the first case.
Over the four cases, the largest approximation ratio is
$2$. 
\end{proof}

Note that when the prediction is in $[\gamma,1-\gamma]$, the
\mymaxmintrunc\ mechanism does even better. In this setting, the mechanism
is 1-consistent with respect to minimum utility or maximum distance.
It is only with extreme predictions (less than $\gamma$ or greater
than $1-\gamma$) where consistency drops.
\myOmit{
Note also that while the consistency of the 
\mymaxmintrunc\ mechanism with respect to the minimum utility
drops smoothly from 1 to $\sfrac{3}{2}$
  as $\gamma$ goes from 0 to $\sfrac{1}{2}$, the consistency
  is discontinuous with respect to maximum distance,
  jumping from 1-consistent at $\gamma=0$ to 2-consistent
  for any $\gamma > 0$. The additive error in distance increases
  continuously with $\gamma$ but the multiplicative error
  is unbounded since the optimal maximum distance may be zero. }
Note also that by adjusting 
  $\gamma$, we can {\bf trade 
    consistency for robustness} (see Figure 1 for a visualization
  of this).
   At $\gamma=0$, the 
\mymaxmintrunc\  mechanism
   is 1-consistent with respect to minimum utility but has unbounded robustness.
   Increasing $\gamma$ decreases robustness
   but increases consistency.
   At $\gamma=\sfrac{1}{2}$,
   the mechanism is $\sfrac{3}{2}$-consistent and
$\sfrac{3}{2}$-robust.

We return now to the reason that 
we proposed a mechanism that smoothly interpolates
between the \mymaxmin\ mechanism
(\mymaxmintrunc\  with $\gamma=0$)
and
the \mymidornearest\ mechanism
(\mymaxmintrunc\  with $\gamma=\sfrac{1}{2}$).
The reason is that
the \mymaxmin\ mechanism has optimal
consistency,
while the \mymidornearest\ mechanism
has optimal robustness (achieving an optimal
2-approximation of the maximum distance,
and an optimal $\sfrac{3}{2}$-approximation of 
  the minimum utility). 
  Indeed, as we show next, the \mymidornearest\ mechanism
  is {\bf the unique} anonymous, Pareto efficient and
  strategy proof mechanism that
 $\sfrac{3}{2}$-approximates
  the minimum utility.
  
\begin{theorem}
  No anonymous, Pareto efficient
  and strategy-proof mechanism
  besides the \mymidornearest\ mechanism has as good an
  approximation ratio 
  of the minimum utility. \end{theorem}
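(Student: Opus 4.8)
The plan is to combine Moulin's characterization with a single family of worst-case instances. By Moulin's theorem (cited above), any anonymous, Pareto efficient and strategy-proof mechanism is an instance of \gmv, hence is determined by a fixed non-decreasing sequence of phantoms $z_1 \le \cdots \le z_{n-1}$, and \mymidornearest\ is precisely the instance with $z_i = \sfrac{1}{2}$ for all $i$. By Theorem~2 the \mymidornearest\ mechanism attains approximation ratio $\sfrac{3}{2}$ for the minimum utility, so it suffices to show that every other choice of phantoms admits some instance (with $n\ge 2$) on which the approximation ratio strictly exceeds $\sfrac{3}{2}$.

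First I would use a reflection argument to assume some phantom exceeds $\sfrac{1}{2}$. The map $x \mapsto 1-x$ is an isometry of $[0,1]$, so it preserves all distances, utilities and the optimum, and it carries the \gmv\ mechanism with phantoms $z_1 \le \cdots \le z_{n-1}$ to the \gmv\ mechanism with phantoms $1-z_{n-1} \le \cdots \le 1-z_1$ while preserving its minimum-utility approximation ratio. Hence if the phantoms are not all $\sfrac{1}{2}$ then, after possibly reflecting and relabelling, we may assume $z_i > \sfrac{1}{2}$ for some index $i$.

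Next I would analyse the instance with $n-i$ agents at $0$ and $i$ agents at $z_i$ (and, in the degenerate case $z_i = 1$, the instance with $n-i$ agents at $0$ and $i$ agents at $1$). The key step is to verify, by counting how many of the $2n-1$ points (the $n$ agents together with the $n-1$ phantoms) lie strictly below, exactly at, and strictly above $z_i$, that $z_i$ is the median of this multiset, so the mechanism locates the facility at $z_i$; here one uses $z_i > \sfrac{1}{2} > 0$ so that the $n-i$ agents at $0$ count strictly below $z_i$, and the bound that at most $i-1$ phantoms are strictly below $z_i$. The egalitarian optimum for agents at $\{0,z_i\}$ is the midpoint $\sfrac{z_i}{2}$, giving minimum utility $1-\sfrac{z_i}{2}$, whereas the mechanism gives minimum utility $1-z_i$; the resulting ratio $\frac{2-z_i}{2(1-z_i)}$ is strictly greater than $\sfrac{3}{2}$ exactly when $z_i > \sfrac{1}{2}$, and is unbounded when $z_i=1$. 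This shows the mechanism does not match the $\sfrac{3}{2}$ bound of \mymidornearest, completing the argument.

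The main obstacle is the median-counting step: one must check carefully that placing $i$ agents exactly at $z_i$ (together with the $n-i$ agents at $0$) forces the \gmv\ mechanism to output $z_i$ \emph{regardless} of how the remaining phantoms $z_1,\dots,z_{i-1},z_{i+1},\dots,z_{n-1}$ are positioned. Everything else — the reflection reduction and the one-line ratio computation $\frac{2-z_i}{2(1-z_i)} > \sfrac{3}{2} \iff z_i > \sfrac{1}{2}$ — is routine.
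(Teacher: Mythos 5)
Your proof is correct and takes essentially the same route as the paper's: both invoke Moulin's characterization to reduce to a phantom-median mechanism, then pin the facility at a phantom off $\sfrac{1}{2}$ with a two-location instance, yielding the ratio $\frac{2-z_i}{2(1-z_i)}=\frac{1+a}{2a}>\sfrac{3}{2}$. The only difference is cosmetic: the paper picks the smallest deviating phantom $a<\sfrac{1}{2}$ and places $n-1$ agents at $a$ and one at $1$, whereas you reflect to a deviating phantom $z_i>\sfrac{1}{2}$ and split the agents by its index ($i$ at $z_i$, $n-i$ at $0$); your median-counting step is valid.
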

\begin{proof}
Consider any anonymous, Pareto efficient
  and strategy-proof mechanism. 
This is a median mechanism with $n-1$ phantoms \cite{moulin1980}. 
If this is not the \mymidornearest\ mechanism, one 
of the phantoms will be different to $\sfrac{1}{2}$. 
Consider the smallest such 
  phantom $a$. Suppose $0 \leq a < \sfrac{1}{2}$. A dual argument 
  holds for $\sfrac{1}{2} < a \leq 1$.
  Consider one agent at $1$ and the remaining agents
  at $a$. The facility is located at $a$, giving a 
  minimum utility of
  $a$. The optimal minimum
  utility is $\sfrac{1}{2}+\sfrac{a}{2}$. Therefore the approximation
  ratio is $\frac{(1+a)}{2a}$.
  For $0 \leq a < \sfrac{1}{2}$, this is in $(\sfrac{3}{2},\infty]$.
  Hence 
  the approximation ratio is worse than $\sfrac{3}{2}$.
\end{proof}

\section{Randomized mechanisms}

A randomized mechanism
returns a probability distribution over ex post outcomes. We
compute expectations for approximation ratios, 
robustness and consistency over this distribution.
We can often
achieve better results in expectation with
randomized mechanisms. 
Consider, for example, 
the strategy proof
\mylrm\ mechanism which locates
the facility at $x_1$ with probability $\sfrac{1}{4}$,
at $\frac{(x_1+x_n)}{2}$ with probability $\sfrac{1}{2}$,
and at $x_n$ with probability $\sfrac{1}{4}$.
This achieves an optimal $\sfrac{3}{2}$-approximation of
the maximum distance in expectation
(Procaccia and Tennenholtz \shortcite{ptacmtec2013}
show this for the real line but the result
easily extends to any interval).
The \mylrm\ mechanism does not do quite as well
at approximating the optimal minimum utility, only 
2-approximating it in expectation
\cite{wecai2024}.
\myOmit{
  Indeed, this is worse than a deterministic and
strategy proof mechanism such as 
the   \mymidornearest\ mechanism
which $\sfrac{3}{2}$-approximates
the minimum utility (which is optimal for deterministic and
strategy proof mechanisms). }

The \mylrm\ mechanism can be adapted to take
advantage of predictions. Given a parameter $\delta \in
[0,\sfrac{1}{2}]$, the \mylrmormaxmin\ mechanism proposed
in \cite{abgou22}
uses the \mylrm\ mechanism with probability $2\delta$,
and the \mymaxmin\ mechanism with probability $1-2\delta$.
This achieves an optimal $1+\delta$-consistency and
$2-\delta$-robustness in expectation with respect to the maximum
distance (Proposition 1 and Theorem 1 in \cite{abgou22}).

\begin{theorem}
  For $\delta \in [0,\sfrac{1}{2}]$,
  the \mylrmormaxmin\ mechanism is
  $\frac{1}{(1-\delta)}$-consistent
 and $\frac{1}{\delta}$-robust
  in expectation with respect to the optimal
  minimum utility.
   \end{theorem}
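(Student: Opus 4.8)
The plan is to analyze the randomized mechanism \mylrmormaxmin\ by conditioning on which of its two component mechanisms fires, using linearity of expectation for the minimum utility. Concretely, with probability $2\delta$ we run \mylrm\ and with probability $1-2\delta$ we run \mymaxmin. For a fixed instance, let $u^{*} = 1 - \frac{x_n - x_1}{2}$ be the optimal minimum utility, and let $U_{\mylrm}$ and $U_{\mymaxmin}$ denote the (expected, in the case of \mylrm) minimum utility delivered by each component. Then the expected minimum utility of \mylrmormaxmin\ is $2\delta\, U_{\mylrm} + (1-2\delta)\, U_{\mymaxmin}$, and the approximation ratio we must bound is $u^{*} / \bigl(2\delta\, U_{\mylrm} + (1-2\delta)\, U_{\mymaxmin}\bigr)$. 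I would import from the earlier results (and \cite{wecai2024}) that \mylrm\ 2-approximates the minimum utility, i.e. $U_{\mylrm} \geq u^{*}/2$, and that \mymaxmin\ is 1-consistent with respect to minimum utility (Theorem 1), i.e. when the prediction is correct $U_{\mymaxmin} = u^{*}$.

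For \textbf{consistency}, assume the prediction is correct. Then $U_{\mymaxmin} = u^{*}$ exactly, and $U_{\mylrm} \geq u^{*}/2$ always. Hence the expected minimum utility is at least $2\delta \cdot \tfrac{u^{*}}{2} + (1-2\delta) u^{*} = \delta u^{*} + u^{*} - 2\delta u^{*} = (1-\delta) u^{*}$, giving an approximation ratio of at most $\frac{1}{1-\delta}$. I should also check this bound is tight by exhibiting an instance — e.g. $x_1 = 0$, $x_n = 1$, correct prediction $\pi = \tfrac12$ — where \mymaxmin\ gives minimum utility exactly $\tfrac12 = u^{*}$ and \mylrm\ gives expected minimum utility exactly $\tfrac14 = u^{*}/2$ (utility $0$ at the endpoints each with weight $\tfrac14$, utility $\tfrac12$ at the midpoint with weight $\tfrac12$), so the mixture attains $(1-\delta) u^{*}$ precisely.

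For \textbf{robustness} the component \mymaxmin\ can be worthless: as in the proof of Theorem 1, with $\pi = x_1 = 0$, $x_n = 1$ the \mymaxmin\ branch yields minimum utility $0$, so $U_{\mymaxmin} \geq 0$ is all we can say in general. Thus the expected minimum utility is at least $2\delta \cdot \tfrac{u^{*}}{2} + (1-2\delta)\cdot 0 = \delta u^{*}$, giving an approximation ratio of at most $\frac{1}{\delta}$. Tightness follows from the same bad instance $x_1 = 0 = \pi$, $x_n = 1$: here $u^{*} = \tfrac12$, the \mylrm\ branch contributes $2\delta \cdot \tfrac14$, the \mymaxmin\ branch contributes $0$, so the expected minimum utility is exactly $\tfrac{\delta}{2} = \delta u^{*}$ and the ratio is exactly $\frac1\delta$.

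The main obstacle is not the arithmetic of the mixture — which is a one-line linearity computation once the component guarantees are in hand — but rather justifying the two worst-case instances simultaneously witness tightness and that no instance does worse, i.e. that the per-component bounds $U_{\mylrm}\geq u^{*}/2$ and (for consistency) $U_{\mymaxmin}=u^{*}$ cannot be improved in a way that would lower the ratio. For consistency in particular one must confirm that the instance driving \mylrm\ to its worst ratio of $2$ is compatible with the prediction being correct, so that \mymaxmin\ really is contributing its full $u^{*}$ on that same instance; the instance above does this, so the bound $\frac{1}{1-\delta}$ is exactly attained rather than merely an upper bound. A minor edge case worth a sentence is $\delta = 0$, where robustness $\frac1\delta$ is unbounded, consistent with \mymaxmin\ alone having unbounded robustness (Theorem 1), and $\delta = \tfrac12$, where both bounds give $2$, matching the stand-alone \mylrm\ guarantee.
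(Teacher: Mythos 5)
Your proposal is correct and follows essentially the same route as the paper's proof: condition on which component mechanism fires, use linearity of expectation with the known guarantees that \mylrm\ 2-approximates the minimum utility and that \mymaxmin\ is exactly optimal when the prediction is correct (and may yield zero utility otherwise), obtaining $(1-\delta)u$ and $\delta u$ respectively. The explicit tightness instances you add are a fine (if unnecessary for the stated bounds) supplement that the paper omits.
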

\begin{proof}
  Suppose the prediction is correct and the
  optimal minimum utility $u$.
  The expected minimum utility is $2 \delta \frac{u}{2} + (1-2\delta)
  u  = (1-\delta)u$. Hence it is $\frac{1}{(1-\delta)}$-consistent.
  Suppose the prediction is incorrect. 
  The expected minimum utility is $2 \delta \frac{u}{2}   = \delta u$. Hence it is $\frac{1}{\delta}$-robust.
\end{proof}

For any $\delta > 0$, both the approximation
ratios for consistency and robustness are
worse with respect to 
minimum utility compared to the ratios for maximum
distance. 

As with deterministic mechanisms,
censoring extreme facility locations improves 
performance. 
Let $y=\mymax(\sfrac{1}{3},\mymin(x_1,\sfrac{2}{3}))$
and
$z=\mymax(\sfrac{1}{3},\mymin(\sfrac{2}{3},x_n))$. 
The \mylrmtrunc\  mechanism proposed in
\cite{wecai2024}
locates the facility at $y$ with
probability $\sfrac{1}{4}$,
at $\sfrac{(y+z)}{2}$ with probability $\sfrac{1}{2}$
and $z$ otherwise. 
This mechanism truncates
facility locations to $[\sfrac{1}{3},\sfrac{2}{3}]$. 
It is strategy proof and achieves
in expectation
an optimal $\sfrac{4}{3}$-approximation of the minimum utility,
and a 
$2$-approximation of
the maximum distance 
\cite{wecai2024}. 
The \mylrmtrunc\ mechanism can also be adapted to take
advantage of predictions by combining it with
the \mymaxmin\ mechanism. Given a parameter $\delta \in
[0,\sfrac{1}{2}]$, the \mylrmtruncormaxmin\ 
mechanism 
uses the \mylrmtrunc\ mechanism with probability $2\delta$,
and the \mymaxmin\ mechanism with probability $1-2\delta$.

\begin{theorem}
  For $\delta \in [0,\sfrac{1}{2}]$,
  the \mylrmtruncormaxmin\ mechanism is strategy proof, 
$\frac{2}{(2-\delta)}$-consistent
 and $\frac{2}{3\delta}$-robust
  in expectation with respect to the optimal
  minimum utility.
It is also 
$1+2\delta$-consistent
 and 2-robust
  in expectation with respect to the optimal
  maximum distance. 
   \end{theorem}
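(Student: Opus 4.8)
The plan is to follow the template of the analogous \mylrmormaxmin result (Theorem~5). The mechanism \mylrmtruncormaxmin is the fixed-probability mixture that runs \mylrmtrunc with probability $2\delta$ and \mymaxmin with probability $1-2\delta$, so every bound we want will come out as the corresponding convex combination of guarantees already in hand: \mylrmtrunc is strategy proof and, in expectation, a $\sfrac{4}{3}$-approximation of the minimum utility and a $2$-approximation of the maximum distance \cite{wecai2024}, while \mymaxmin is strategy proof, $1$-consistent for both objectives, and $2$-robust for the maximum distance \cite{abgou22}.

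First I would dispose of strategy proofness. Since the mixing weights $2\delta$ and $1-2\delta$ are constants, each agent's expected distance to the single facility under \mylrmtruncormaxmin equals $2\delta$ times that agent's expected distance under \mylrmtrunc plus $(1-2\delta)$ times its distance under \mymaxmin; a misreport cannot decrease either term, hence cannot decrease the sum, so the mixture is strategy proof in expectation.

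Next I would read off the four approximation bounds. For minimum utility with a correct prediction, \mymaxmin returns the optimal location (contributing the optimal value $u$) and \mylrmtrunc contributes at least $\sfrac{3}{4}u$, so the expected minimum utility is at least $2\delta\cdot\sfrac{3}{4}u+(1-2\delta)u=\frac{(2-\delta)}{2}u$, giving $\frac{2}{(2-\delta)}$-consistency. For minimum utility with an arbitrary prediction, \mylrmtrunc still contributes at least $\sfrac{3}{4}u$ while \mymaxmin contributes at least $0$ (it can be driven to exactly $0$, as in the proof of Theorem~1), so the expected minimum utility is at least $2\delta\cdot\sfrac{3}{4}u=\frac{3\delta}{2}u$, giving $\frac{2}{3\delta}$-robustness. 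For maximum distance with a correct prediction, \mymaxmin is exact (returning the optimal maximum distance $D$) and \mylrmtrunc is within $2D$, so the expected maximum distance is at most $2\delta\cdot 2D+(1-2\delta)D=(1+2\delta)D$; for an arbitrary prediction both components are within $2D$ (\mymaxmin by its $2$-robustness, \mylrmtrunc by its $2$-approximation), so the expected maximum distance is at most $2D$. As for the deterministic truncated mechanisms, degenerate profiles with optimal maximum distance $0$ are set aside when speaking of the multiplicative ratio.

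To finish, I would check that the ratios are attained. For the minimum-utility consistency bound, placing all agents at an endpoint (say $0$) with prediction $\pi=0$ realizes the correct prediction together with the $\sfrac{4}{3}$ worst case of \mylrmtrunc, and the convex combination then evaluates to $\frac{(2-\delta)}{2}u$ exactly; the maximum-distance bounds are witnessed analogously. The step I expect to be the main obstacle is the tightness of the minimum-utility robustness bound: it requires a single profile on which \mymaxmin's output has minimum utility $0$ \emph{and} \mylrmtrunc simultaneously attains its $\sfrac{4}{3}$ worst case. Unlike the untruncated \mylrm, whose minimum-utility worst case sits exactly at the interval endpoints $\{0,1\}$ where \mymaxmin also collapses, the worst case of \mylrmtrunc occurs at a profile drawn in from the endpoints, so one must verify that this profile can be paired with a sufficiently extreme incorrect prediction so that both extremes occur at once; this alignment is the crux, and if it cannot be met the displayed expression is still a valid upper bound on the robustness.
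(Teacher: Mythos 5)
Your proposal is correct and follows essentially the same route as the paper: strategy proofness from mixing two strategy proof mechanisms with report-independent probabilities, and each bound obtained as the convex combination $2\delta\cdot(\text{\mylrmtrunc guarantee})+(1-2\delta)\cdot(\text{\mymaxmin guarantee})$ for the relevant objective and prediction quality. The closing tightness discussion is unnecessary, since the paper's definitions of consistency and robustness only require upper bounds on the approximation ratio, which your convex-combination argument already delivers.
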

   \begin{proof}
     Strategy proofness is immediate from the strategy proofness of
     the constituent mechanisms and the fact that
     the choice of mechanism is independent of
     the agents' reports. 

     Suppose the prediction is correct and the
  optimal minimum utility is $u$.
  The expected minimum utility is $2 \delta \frac{3}{4} u + (1-2\delta)
  u  = (1-\frac{\delta}{2})u$. Hence it is $\frac{2}{(2-\delta)}$-consistent.
  Suppose the prediction is incorrect. 
  The expected minimum utility is $2 \delta \frac{3}{4} u  =
  \frac{3 \delta }{2} u$. Hence it is $\frac{2}{3\delta}$-robust.

  Suppose the prediction is correct and the
  optimal maximum distance is $d$.
  The expected maximum distance is $2 \delta 2 d + (1-2\delta)
  d  = (1+2\delta)d$. Hence it is $1+2\delta$-consistent.
  Suppose the prediction is incorrect. 
  With respect to the maximum distance,
  since it is a probablistic
  mixture of two 2-robust mechanisms,
  it is itself 2-robust in expectation. 
\end{proof}

In approximating the minimum utility,
the \mylrmtruncormaxmin\ mechanism outperforms
the \mylrmormaxmin\ mechanism in four
ways:
\begin{enumerate} \itemsep=0pt
    \item
For any fixed $\delta > 0$,
both the consistency 
and robustness 
of \mylrmtruncormaxmin\
are better than for \mylrmormaxmin. 
\item
For any given consistency in the interval $(1,\sfrac{4}{3}]$,  
\mylrmtruncormaxmin\ achieves
an expected robustness than is three times smaller
than for \mylrmormaxmin. 
%
%
%
%
\item
For any given robustness greater than or equal to $\sfrac{4}{3}$, 
\mylrmtruncormaxmin\ achieves
a smaller expected consistency than \mylrmormaxmin. 
%
%
%
\item
  \mylrmtruncormaxmin\ achieves a consistency in 
  $[1,\sfrac{3}{4}]$, while \mylrmormaxmin\
    achieves a consistency in $[1,2]$. 
\end{enumerate}
On the other hand, in approximating the maximum distance, 
the \mylrmormaxmin\ mechanism outperforms
the \mylrmtruncormaxmin\ mechanism again
in four other ways:
\begin{enumerate} \itemsep=0pt
    \item
For any fixed $\delta > 0$,
both the consistency 
and robustness 
of \mylrmormaxmin\
are better than for \mylrmtruncormaxmin. 
\item
For any given consistency $c \in (1,\sfrac{3}{2}]$, 
\mylrmormaxmin\ achieves
an expected robustness of $3-c$ which is
strictly smaller
than the fixed 2-robustness of \mylrmtruncormaxmin. 
%
%
%
\item
  \mylrmormaxmin\ achieves a robustness in $[\sfrac{3}{2},2]$,
  while
  \mylrmtruncormaxmin\
    is only ever 2-robust.
\item
  \mylrmormaxmin\ achieves a consistency in 
  $[1,\sfrac{3}{2}]$, while \mylrmtruncormaxmin\
    achieves a consistency in $[1,2]$. 
\end{enumerate}
By adjusting
  $\delta$, both mechanisms again {\bf trade 
  consistency for robustness}.
   At $\delta=0$,
   both \mylrmormaxmin\ 
   and
   \mylrmtruncormaxmin\
   are 1-consistent with respect to minimum utility but have unbounded
   robustness. Increasing $\delta$ decreases robustness but
   increases consistency.
   At $\delta=\sfrac{1}{2}$,
   \mylrmormaxmin\ is 2-consistent 
   and 2-robust with respect to minimum utility, while 
   \mylrmtruncormaxmin\ 
   is $\sfrac{4}{3}$-consistent and
$\sfrac{4}{3}$-robust. 
See Figure 1 for a visualization.

  \begin{figure}[htb]
\includegraphics[width=0.5\textwidth,height=190pt,scale=2.4]{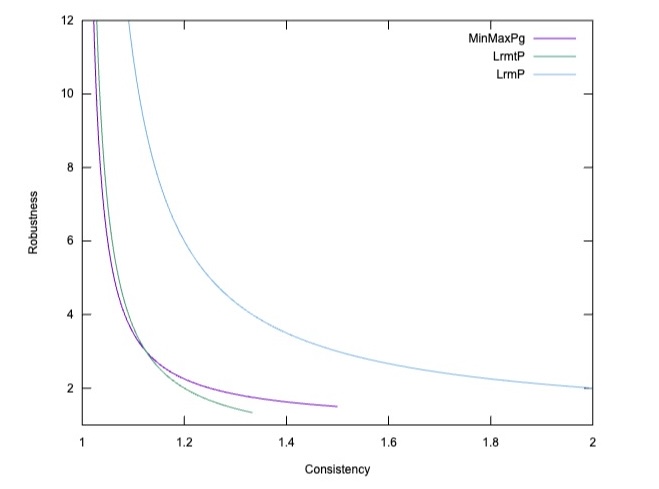}
 \caption{Trade-off between consistency (x-axis) and robustness
   (y-axis) with respect to the minimum utility
   for the \mymaxmintrunc\  mechanism when varying $\gamma \in
   [0,\sfrac{1}{2}]$,
and for the   \mylrmormaxmin\ 
and \mylrmtruncormaxmin\ mechanisms when varying $\delta \in
[0,\sfrac{1}{2}]$. 
}
\end{figure}

\section{Two facilities}

We now design several
new mechanisms for locating two facilities in which
the mechanism
is provided with {\bf two predictions}, one for
the 
location of the leftmost facility, and another for the
location of the rightmost facility.
Xu and Lu \shortcite{ijcai2022p81} propose
a deterministic mechanism with predictions for the two facility
problem that is $(1+\sfrac{n}{2})$-consistent and
$(2n-1)$-robust with respect to the maximum distance.
They observe:
\begin{quote}
{\em ``Whether there is a mechanism with $o(n)$-consistent and
  a bounded robustness is a very interesting open question.''
}
\end{quote}
We answer this open question positively in two ways.
First, we design a novel
deterministic mechanism for approximating the minimum
utility with bounded consistency and robustness.
Second, we design a novel randomized mechanism
for approximating the maximum distance 
with bounded consistency and robustness. 

The \mymaxmintwo\ mechanism 
locates the two facilities by applying the \mymaxmin\
mechanism to each of the predictions in turn. 
If $x_1$ and $x_n$ are the minimum and maximum locations
of the agents, and $\pi_1$ and $\pi_2$ are the two predicted
locations of the facilities,
then $\mymaxmintwo(x_1,x_n,\pi_1,\pi_2)$
locates one facility at $\mymaxmin(x_1,x_n,\pi_1)$
and the other at $\mymaxmin(x_1,x_n,\pi_2)$.
The next theorem demonstrates that 
the \mymaxmintwo\ mechanism is 1-consistent with respect
to the maximum distance or minimum utility,
$\sfrac{3}{2}$-robust with respect to the minimum utility,
but has unbounded robustness with respect to the
maximum distance.

As with one facility, we also adapt the mechanism
to censor extreme predictions. This again lets us
{\bf trade consistency for robustness}.
Given $\lambda \in [0,\sfrac{1}{4}]$,
\mymaxmintwotrunc$(x_1,x_n,\pi_1,\pi_2)$
maps the leftmost prediction
$\pi_1$ onto $\pi_1' = \mymax(\lambda,\mymin(\pi,1-3\lambda))$, the
rightmost prediction
$\pi_2$ onto $\pi_2' = \mymax(3\lambda,\mymin(\pi,1-\lambda))$, and then
applies 
\mymaxmintwo$(x_1,x_n,\pi_1',\pi_2')$.
For $\lambda = 0$, predictions are not censored. 
For $\lambda = \sfrac{1}{4}$,
the leftmost prediction is mapped onto $\sfrac{1}{4}$
while the rightmost prediction is mapped onto
$\sfrac{3}{4}$.
More generally, the leftmost prediction is mapped
into $[\lambda,1-3\lambda]$,
and the rightmost prediction into $[3\lambda,1-\lambda]$.

\begin{theorem}
  For $\lambda \in [0,\sfrac{1}{4}]$,
  the \mymaxmintwotrunc\ mechanism is strategy proof,
$\frac{(2-\lambda)}{(2-2\lambda)}$-consistent
  and $\frac{(3+2\lambda)}{2(1+2\lambda)}$-robust 
  with respect to the optimal
  minimum utility.
  At $\lambda=0$, it is $1$-consistent and $\sfrac{3}{2}$-robust while
  at $\lambda=\sfrac{1}{4}$, it is $\sfrac{7}{6}$-consistent and
$\sfrac{7}{6}$-robust.

With respect to the optimal maximum
  distance, 
\mymaxmintwotrunc\ is $1$-consistent and has
  unbounded robustness 
  at $\lambda=0$, 
  and has unbounded consistency and robustness
for $\lambda>0$. 
   \end{theorem}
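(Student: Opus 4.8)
The plan is to mirror the single-facility \mymaxmintrunc\ analysis, applied separately to each of the two facilities together with the contiguous block of agents it serves, and then to control how the two blocks interact. Strategy proofness is the easy part: the truncation map $\pi_1\mapsto\pi_1'$, $\pi_2\mapsto\pi_2'$ does not depend on the agents' reports, so the mechanism reduces to placing each facility at the clamp of a \emph{fixed} target into $[x_1,x_n]$ (which is what strategy-proof \mymaxmin\ does); an agent can change $f_1$ or $f_2$ only by becoming the new leftmost or rightmost report, and either move clamps both facilities away from that agent rather than towards it, so no misreport is profitable.

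For consistency with respect to minimum utility, when the prediction is correct the optimal facilities $\pi_1\leq\pi_2$ lie in $[x_1,x_n]$ and induce a contiguous left/right partition of the agents with common optimal maximum distance $d^\star$, hence optimal minimum utility $1-d^\star$. If $\pi_1\in[\lambda,1-3\lambda]$ and $\pi_2\in[3\lambda,1-\lambda]$ neither prediction is altered and the mechanism is exactly optimal; otherwise at least one prediction is truncated, and I run the case split of the \mymaxmintrunc\ proof on the truncated facility (the cases with $\pi_1$ near $\lambda$, the symmetric cases near $1-3\lambda$, and likewise for $\pi_2$ near $3\lambda$ and $1-\lambda$), computing the distance of the worst agent of the affected block once that facility is relocated to $\mathrm{median}(x_1,x_n,\pi_1')$ or $\mathrm{median}(x_1,x_n,\pi_2')$. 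Each affected-block ratio has the algebraic form $\tfrac{1-b}{1-2b}$ or $\tfrac{1-b}{1-b-\mu}$ already seen there, maximised at the analogue of the single-facility worst case to give $\tfrac{(2-\lambda)}{(2-2\lambda)}$; because minimum utility is a minimum over the two blocks, a loss in the left block and a loss in the right block do not compound — the overall ratio is the larger of the two block ratios — so the bound $\tfrac{(2-\lambda)}{(2-2\lambda)}$ survives, specialising to $1$ at $\lambda=0$ and $\sfrac{7}{6}$ at $\lambda=\sfrac{1}{4}$.

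For robustness with respect to minimum utility the prediction is adversarial, so $f_1=\mathrm{median}(x_1,x_n,\pi_1')$ with $\pi_1'\in[\lambda,1-3\lambda]$ and $f_2=\mathrm{median}(x_1,x_n,\pi_2')$ with $\pi_2'\in[3\lambda,1-\lambda]$; I would bound the mechanism's maximum distance by a four-case analysis on where $f_1,f_2$ fall relative to $[x_1,x_n]$ (both strictly to one side of the agents, one clamped to an endpoint, and the two mirror cases), combine this with $d^\star\leq\tfrac{x_n-x_1}{2}$ to lower-bound the optimal minimum utility, and read off the extremal instance in which the adversary drives both facilities to one side so the extreme agent on the opposite side is served only at the nearer truncation bound — the two-facility analogue of the single-facility $x_1=0$, $x_n=1-\gamma$, $\pi'=\gamma$ configuration — yielding $\tfrac{(3+2\lambda)}{2(1+2\lambda)}$, i.e.\ $\sfrac{3}{2}$ at $\lambda=0$ and $\sfrac{7}{6}$ at $\lambda=\sfrac{1}{4}$. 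The maximum-distance claims are quick: at $\lambda=0$ there is no truncation, so a correct prediction is returned verbatim ($1$-consistent) while a prediction that pins both facilities to one endpoint of $[x_1,x_n]$, with agents at $0$ and $1$, makes the maximum distance positive against an optimum of $0$ (unbounded robustness); and for any $\lambda>0$ those same two agents have optimal maximum distance $0$ with correct prediction $(0,1)$, yet the mechanism places facilities at $\lambda$ and $1-\lambda$, so the multiplicative error is infinite, witnessing unbounded consistency and robustness simultaneously.

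The main obstacle is the two middle parts rather than any single computation. Unlike the single-facility proof, I must keep track of which facility is nearest to each agent after a facility has been displaced — a displaced left facility can hand off part of its block to the right facility, which only helps but must be ruled out as a hidden source of extra loss — and I must check that clamping the already-truncated prediction back into $[x_1,x_n]$ is never worse than clamping into the truncation interval alone. Together with pinning down the single worst configuration within each of the consistency and robustness case families, so that the maxima are exactly $\tfrac{(2-\lambda)}{(2-2\lambda)}$ and $\tfrac{(3+2\lambda)}{2(1+2\lambda)}$ and nothing larger, this bookkeeping is where the real work lies.
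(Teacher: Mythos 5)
Your plan follows the same route as the paper's own proof: reduce to the single-facility \mymaxmintrunc\ case analysis applied per facility/block, and assert that the worst cases are the single-facility analogues (giving $\frac{(2-\lambda)}{(2-2\lambda)}$ and $\frac{(3+2\lambda)}{2(1+2\lambda)}$). But the step you defer as ``bookkeeping'' is not bookkeeping --- it is exactly where the argument breaks, and your proposal offers no way to carry it out. In the single-facility proof, truncating a correct extreme prediction is harmless beyond the stated ratio only because a correct prediction equals $\frac{x_1+x_n}{2}$, so a prediction near an endpoint forces the \emph{entire} agent range near that endpoint and the clamp into $[x_1,x_n]$ limits the displacement. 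Blockwise this fails: a correct $\pi_2$ near $1$ only forces the \emph{right block} near $1$, while $x_1$ can sit far to the left, so the clamp does nothing. Concretely, take agents at $\sfrac{1}{2}$ and $1$ with correct predictions $\pi_1=\sfrac{1}{2}$, $\pi_2=1$: the mechanism truncates $\pi_2$ to $1-\lambda$, the agent at $1$ gets utility $1-\lambda$ while the optimum is $1$, a ratio of $\frac{1}{1-\lambda}$, which strictly exceeds $\frac{(2-\lambda)}{(2-2\lambda)}$ for every $\lambda>0$ (at $\lambda=\sfrac{1}{4}$ it is $\sfrac{4}{3}$ against the claimed $\sfrac{7}{6}$). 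So your assertion that each affected-block ratio is ``maximised at the analogue of the single-facility worst case'' cannot be substantiated.

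The robustness half has the same defect. Your four-case sketch presumes the damage is captured by ``both facilities driven to one side, the opposite extreme agent served at the nearer truncation bound,'' but nothing rules out worse configurations: at $\lambda=0$ there is no censoring, so agents at $0$ and $1$ with both predictions at $0$ place both facilities at $0$, minimum utility $0$, i.e.\ unbounded robustness rather than $\sfrac{3}{2}$; and at $\lambda=\sfrac{1}{4}$, agents at $0$ and $1$ receive facilities at $\sfrac{1}{4}$ and $\sfrac{3}{4}$, ratio $\sfrac{4}{3}>\sfrac{7}{6}$. The parts of your argument that do hold up are strategy proofness and the maximum-distance claims. Be aware that the paper's own proof makes the same unproved leap (``using a similar case analysis, the worst case is \ldots''), so you cannot close the gap by importing its reasoning: what is missing is a per-instance bound showing the asserted worst cases dominate, and the examples above show that the minimum-utility bounds as stated are not attainable by any completion of this plan.
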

\begin{proof}
Strategy proofness is immediate from that of the untruncated
mechanism.
With respect to the optimal minimum utility,
suppose the two predictions are correct.
Using a similar case analysis to Theorem 1,
the worst case is when the optimal facility location
is halfway between 0 and $\lambda$, and agents served
by this facility are in $[0,\lambda]$ including at the endpoints
of the interval. Suppose in this case that 
the optimal minimum utility is $1-b$. Then 
$\pi_1 = b = \sfrac{\lambda}{2}$. 
The minimum utility of the solution returned by
the \mymaxmintwotrunc\ mechanism is $1-2b$, 
giving an approximation
ratio of $\frac{(1-b)}{(1-2b)} = \frac{(2-\lambda)}{(2-2\lambda)}$. 
The 
mechanism is therefore
$\frac{(2-\lambda)}{(2-2\lambda)}$-consistent. 

Now we consider the possibility that the predictions are incorrect.
The leftmost truncated prediction is in
$[\lambda,1-3\lambda]$, and the rightmost
prediction in $[3\lambda,1-\lambda]$. 
Using a similar case analysis, 
the worst case is when the facilities are located
by the mechanism at their most extreme point (i.e.
$x_1=\lambda$, $x_n=1-\lambda$). 
The optimal minimum utility in this
setting is $\frac{(3+2\lambda)}{4}$,
while the minimum utility of the solution returned by
the \mymaxmintwotrunc\ mechanism is 
$\frac{(1+2\lambda)}{2}$. 
This gives an approximation
ratio of 
$\frac{(3+2\lambda)}{2(1+2\lambda)}$.
The \mymaxmintrunc\ mechanism is therefore
$\frac{(3+2\lambda)}{2(1+2\lambda)}$-robust with respect to the
minimum utility.

With respect to the optimal maximum distance,
suppose the predictions are correct. For $\lambda=0$,
the \mymaxmintwotrunc\ mechanism is equivalent to
\mymaxmintwo\ mechanism.
It is easy to see that this is $1$-consistent.
For $\lambda>0$,
consider agents at 0 and 1.
The optimal maximum distance is zero,
but 
the \mymaxmintwotrunc\ mechanism locates
facilities at $\lambda$ and $1-\lambda$, giving
a maximum distance of $\lambda$.
The consistency is therefore unbounded. 
Similarly the robustness is unbounded with respect
to the maximum distance irrespective of $\lambda$. 
\end{proof}

As with locating a single facility,
we can achieve better consistency and robustness in expectation
with randomized mechanisms. 
The
\myem\ mechanism (called Mechanism 2  by
Procaccia and Tennenholtz \shortcite{ptacmtec2013})
  locates two facilities in three ways:
  (1) at $x_1$ and $x_n$ with probability
  $\sfrac{1}{2}$; 
  (2) at $x_1+2d$ and $x_n-2d$ 
with probability
$\sfrac{1}{6}$ where
$d$ is the optimal minimum distance;
  (3) and
at $x_1 + d$ and $x_n-d$ with the
remaining probability $\sfrac{1}{3}$. 
The 
mechanism is strategy proof and $\sfrac{5}{3}$-approximates
the maximum distance in expectation
\cite{ptacmtec2013}. It achieves an even better approximation
ratio of the minimum utility.

\begin{theorem}
  The \myem\ mechanism 
$\sfrac{9}{7}$-approximates the optimal
  minimum utility in expectation.
   \end{theorem}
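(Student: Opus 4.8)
The plan is to analyze the expected maximum distance of the solution produced by \myem, since the expected minimum utility is exactly $1$ minus this quantity. Write $d$ for the optimal (minimized) maximum distance for two facilities on $[0,1]$ --- this is the quantity denoted $d$ in the mechanism --- so the optimal minimum utility is $1-d$. Fix a canonical optimal solution: split the sorted agents into a left cluster $\{x_1,\dots,x_k\}$ with its facility at $\frac{x_1+x_k}{2}$ and a right cluster $\{x_{k+1},\dots,x_n\}$ with its facility at $\frac{x_{k+1}+x_n}{2}$, the split chosen to minimize the larger of the two half-diameters (this value is then exactly $d$, since the midpoint is the optimal single location for a set). Consequently both clusters have diameter at most $2d$; in particular every left-cluster agent lies in $[x_1,x_1+2d]$ and every right-cluster agent lies in $[x_n-2d,x_n]$.

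First I would bound the maximum distance in each of the three branches using this cluster structure. In the branch with facilities at $x_1+d$ and $x_n-d$ (probability $\sfrac{1}{3}$), a left-cluster agent $p\in[x_1,x_1+2d]$ is within $|p-(x_1+d)|\le d$ of the first facility, and symmetrically on the right, so the maximum distance is at most $d$. In the branch with facilities at $x_1+2d$ and $x_n-2d$ (probability $\sfrac{1}{6}$), a left-cluster agent is within $(x_1+2d)-p\le 2d$ of the first facility and a right-cluster agent within $2d$ of the second; here it is useful that no agent can lie strictly between $x_1+2d$ and $x_n-2d$, since such a point belongs to neither cluster. In the branch with facilities at $x_1$ and $x_n$ (probability $\sfrac{1}{2}$), a left-cluster agent is within $2d$ of $x_1$ (the cluster has diameter $\le 2d$ and $x_1$ is its left endpoint) and a right-cluster agent within $2d$ of $x_n$. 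Averaging, the expected maximum distance is at most $\frac{1}{2}\cdot 2d+\frac{1}{6}\cdot 2d+\frac{1}{3}\cdot d=\frac{5d}{3}$, so the expected minimum utility is at least $1-\frac{5d}{3}$.

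The second ingredient is the elementary bound $d\le\sfrac{1}{4}$: placing the two facilities at $\sfrac{1}{4}$ and $\sfrac{3}{4}$ covers every point of $[0,1]$ within distance $\sfrac{1}{4}$. Hence the approximation ratio for the minimum utility is at most $\frac{1-d}{1-\frac{5d}{3}}$, which is increasing in $d$ on $[0,\sfrac{3}{5})$; evaluating at $d=\sfrac{1}{4}$ gives $\frac{3/4}{7/12}=\sfrac{9}{7}$, completing the proof.

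I expect the only real work to be checking that the three per-branch distance bounds survive the boundary configurations: a cluster that is a single point or (for $n=1$) empty, the probability-$\sfrac{1}{6}$ facilities coinciding when $x_n-x_1=4d$, or those facilities ``crossing'' when $x_n-x_1<4d$; in each case the same inequalities hold and in fact only become slacker. The one conceptual rather than bookkeeping point is that, unlike the $\sfrac{5}{3}$-approximation of maximum distance which is uniform in $d$, the minimum-utility ratio $\frac{1-d}{1-\frac{5d}{3}}$ grows with $d$, so it is precisely the universal bound $d\le\sfrac{1}{4}$ on the two-facility optimum over $[0,1]$ that pins the worst case --- attained, for instance, by agents at $0,\sfrac{1}{2},1$, where the expected minimum utility is $\sfrac{7}{12}$ against an optimum of $\sfrac{3}{4}$.
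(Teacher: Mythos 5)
Your proposal is correct and follows essentially the same route as the paper's proof: per-branch minimum utilities of $1-d$ (probability $\sfrac{1}{3}$) and $1-2d$ (probability $\sfrac{2}{3}$), the resulting expected utility $\frac{3-5d}{3}$, and maximization of $\frac{3(1-d)}{3-5d}$ at $d=\sfrac{1}{4}$. The only difference is that you spell out details the paper leaves implicit, namely the cluster argument giving the per-branch distance bounds and the universal bound $d\le\sfrac{1}{4}$ on the unit interval, together with the tight instance at $0,\sfrac{1}{2},1$.
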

   \begin{proof}
     With probability $\sfrac{1}{3}$, the mechanism has
     a minimum utility $1-d$, and with probability $\sfrac{2}{3}$, it
     has
     a minimum utility of $1-2d$. The expected minimum utility is
     thus $\frac{(1-d+2-4d)}{3} = \frac{(3-5d)}{3}$.
     This compares to an optimal minimum utility of $1-d$.
     The approximation ratio is thus $\frac{3(1-d)}{(3-5d)}$.
     This is maximized for $d=\sfrac{1}{4}$ when it is
     $\sfrac{9}{7}$.  Hence, the mechanism
     $\sfrac{9}{7}$-approximates the minimum utility. 
     \end{proof}

     The randomized \myem\ mechanism can be augmented to take
     advantage of two predictions for the  optimal
     locations of the two facilities.
     Given a parameter $\theta \in
[0,\sfrac{1}{2}]$, the \myemormaxmintwo\ 
uses 
\myem\ 
with probability $2\theta$,
and 
\mymaxmintwo\ 
with probability $1-2\theta$.

\begin{theorem}
The \myemormaxmintwo\ mechanism is strategy proof.
  With respect to the optimal
  minimum utility, 
it is $\frac{9}{(9-4\theta)}$-consistent
and $\frac{9}{2(3+\theta)}$-robust
for $\theta \in
[0,\sfrac{1}{2}]$. 
  With respect to the optimal maximum
  distance, it 
  is $\frac{(3+4\theta)}{3}$-consistent
  for $\theta \in
[0,\sfrac{1}{2}]$, 
  $\sfrac{5}{3}$-robust 
  for $\theta=\sfrac{1}{2}$. 
  and has unbounded robustness for $\theta < \sfrac{1}{2}$.
   \end{theorem}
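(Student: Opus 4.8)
The plan is to treat \myemormaxmintwo\ as a fixed probabilistic mixture that runs \myem\ with probability $2\theta$ and \mymaxmintwo\ with probability $1-2\theta$, with mixing weights independent of the agents' reports, and to push the guarantees we already have for the two constituents through this mixture. Strategy proofness is then immediate, exactly as in Theorem~7: \myem\ and \mymaxmintwo\ are each strategy proof, and randomising over strategy proof mechanisms with report-independent probabilities cannot benefit any agent. All the remaining work is in the approximation bounds, and the whole argument is organised around a single scalar.

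For the minimum-utility claims I would fix an instance and let $d\in[0,\sfrac{1}{4}]$ be its optimal two-facility maximum distance, so the optimal minimum utility is $1-d$; the bound $d\le\sfrac{1}{4}$ holds because facilities at $\sfrac{1}{4}$ and $\sfrac{3}{4}$ put every point of $[0,1]$ within $\sfrac{1}{4}$ of a facility. From the proof of the \myem\ theorem, \myem\ has expected minimum utility $\frac{3-5d}{3}$ on that instance, regardless of the predictions. If the predictions are correct, \mymaxmintwo\ is $1$-consistent, so its minimum utility is exactly $1-d$, and the mixture has expected minimum utility $2\theta\cdot\frac{3-5d}{3}+(1-2\theta)(1-d)=\frac{3-(3+4\theta)d}{3}$; a short check shows the resulting ratio $\frac{3-3d}{\,3-(3+4\theta)d\,}$ is nondecreasing in $d$ (and its denominator stays positive on $[0,\sfrac{1}{4}]$), so it is maximised at $d=\sfrac{1}{4}$, realised by agents at $0,\sfrac{1}{2},1$, giving $\frac{9}{9-4\theta}$-consistency. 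If the predictions are arbitrary, \mymaxmintwo\ is $\sfrac{3}{2}$-robust, so its minimum utility is at least $\frac{2}{3}(1-d)$, the mixture has expected minimum utility at least $\frac{2[(1+\theta)-(1+3\theta)d]}{3}$, and the same one-variable maximisation, again with maximum at $d=\sfrac{1}{4}$, yields $\frac{9}{2(3+\theta)}$-robustness. As sanity checks, $\theta=0$ recovers the $1$-consistency and $\sfrac{3}{2}$-robustness of \mymaxmintwo, and $\theta=\sfrac{1}{2}$ recovers the $\sfrac{9}{7}$-approximation of \myem.

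For the maximum-distance claims the argument is one-sided. With $d^{\ast}$ the optimal maximum distance and correct predictions, \mymaxmintwo\ is $1$-consistent (distance $d^{\ast}$) and \myem\ is $\sfrac{5}{3}$-approximate in expectation, so the mixture has expected maximum distance at most $2\theta\cdot\sfrac{5}{3}\,d^{\ast}+(1-2\theta)\,d^{\ast}=\frac{3+4\theta}{3}\,d^{\ast}$, the claimed consistency. At $\theta=\sfrac{1}{2}$ the mixture is just \myem, hence $\sfrac{5}{3}$-robust. For $\theta<\sfrac{1}{2}$, take all agents at $0$ and $1$, so $d^{\ast}=0$ and every branch of \myem\ has maximum distance $0$, but feed a prediction that drives both facilities of \mymaxmintwo\ strictly inside $(0,1)$ --- precisely the configuration witnessing its unbounded maximum-distance robustness --- so the mixture has strictly positive expected maximum distance against an optimum of $0$, and its maximum-distance robustness is unbounded.

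I expect the main obstacle to be the bookkeeping that collapses every worst case onto the single parameter $d$ (or $d^{\ast}$): one has to use the \emph{exact} per-instance expected minimum utility of \myem\ and the \emph{exact} minimum utility of \mymaxmintwo\ in its $1$-consistent and in its $\sfrac{3}{2}$-robust extreme cases, not merely their worst-case ratios, so that the two contributions add as explicit functions of $d$; after that the maximisations are routine monotonicity checks. The only other delicate point is the last clause: one must verify that the maximum-distance discrepancy for $\theta<\sfrac{1}{2}$ is a genuine multiplicative blow-up --- because $d^{\ast}$ can be $0$ --- rather than a bounded additive error.
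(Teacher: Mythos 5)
Your proposal is correct and follows essentially the same route as the paper: strategy proofness from mixing strategy proof mechanisms with report-independent probabilities, and the consistency/robustness bounds by linearity of expectation over the two constituents' per-instance guarantees, with your parameterization by $d$ and maximization at $d=\sfrac{1}{4}$ being just a finer-grained version of the paper's direct use of the worst-case factors $\sfrac{7}{9}$, $\sfrac{2}{3}$ and $\sfrac{5}{3}$, landing on identical bounds. One small remark: $\frac{3-5d}{3}$ is a per-branch lower bound on the expected minimum utility of \myem rather than its exact value, but since you use it in the direction that only strengthens the ratio bounds (and you additionally spell out the $\theta=\sfrac{1}{2}$ robustness and the unbounded-robustness witness with agents at $0$ and $1$, which the paper leaves implicit), nothing breaks.
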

\begin{proof}
     Strategy proofness is immediate from the strategy proofness of
     the constituent mechanisms and the fact that
     the choice of mechanism is independent of
     the agents' reports. 

    Suppose the prediction is correct and the
  optimal minimum utility is $u$.
  The expected minimum utility is $2 \theta \frac{7}{9} u + (1-2\theta)
  u  = (1-\frac{4\theta}{9})u$. Hence it is $\frac{9}{(9-4\theta)}$-consistent.
  Suppose the prediction is incorrect. 
  The expected minimum utility is $2 \theta \frac{7}{9} u +
  (1-2\theta) \frac{2}{3} u  =
  \frac{2(3+\theta)}{9} u$. Hence it is $\frac{9}{2(3+\theta)}$-robust.

  Suppose the prediction is correct and the
  optimal maximum distance is $d$.
  The expected maximum distance is $2 \theta \frac{5}{3} d +
  (1-2\theta) d  =
  \frac{(3+4\theta)}{3} d$. Hence it is $\frac{(3+4\theta)}{3}$-consistent.
     \end{proof}

\myOmit{
\begin{figure}[htb]
\includegraphics[scale=1.2,width=0.6\textwidth,height=190pt]{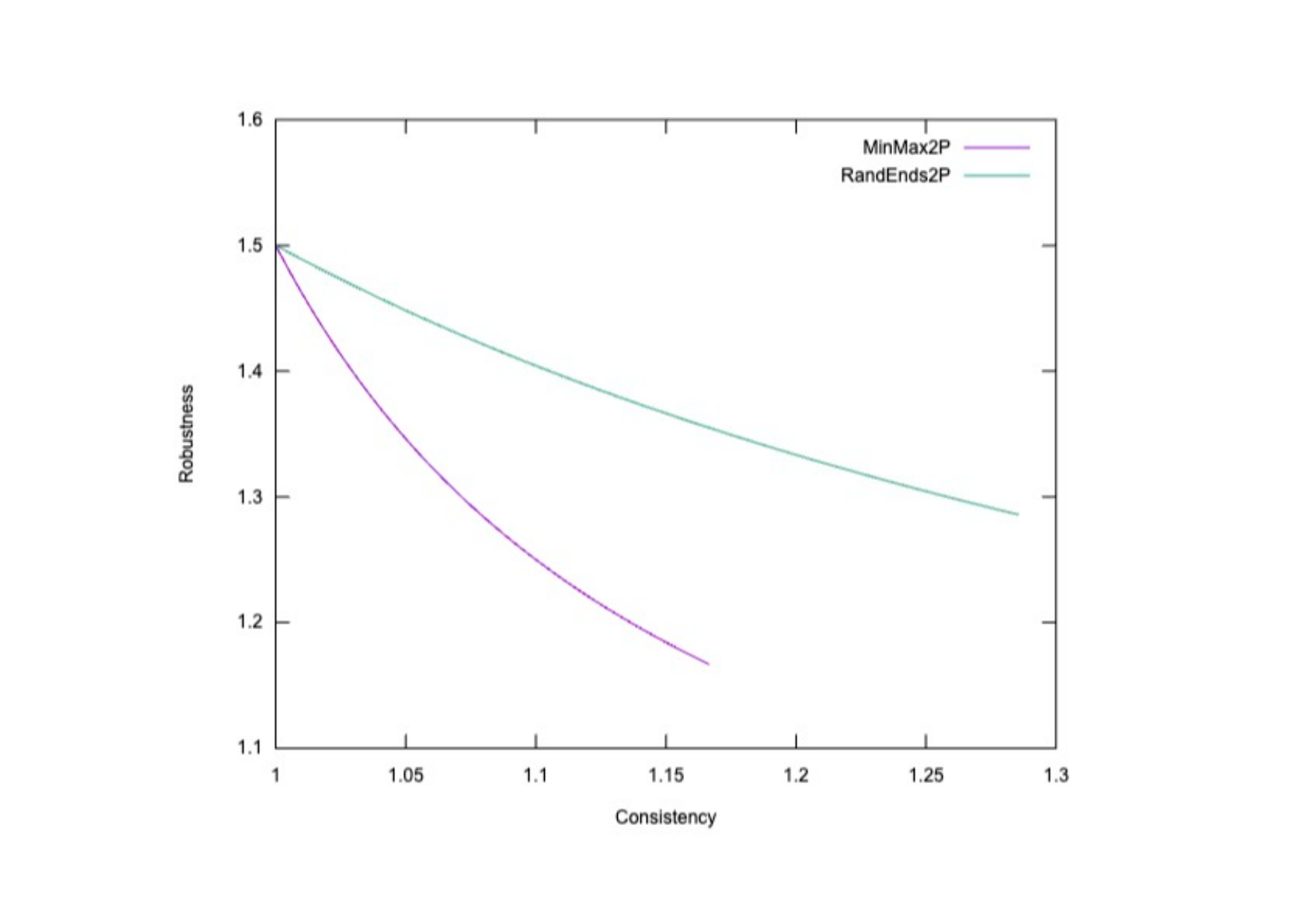}
\caption{Trade-off between consistency (x-axis) and robustness
   (y-axis) for the \mymaxmintwotrunc\ mechanism for $\lambda \in [0,
   \sfrac{1}{4}]$, and for the \myemormaxmintwo\ mechanism 
  for $\theta \in [0,\sfrac{1}{2}]$. 
   At $\lambda=0$, the \mymaxmintwotrunc\
   mechanism
   is 1-consistent and $\sfrac{3}{2}$-robust
   with respect to minimum utility. 
At $\lambda=\sfrac{1}{4}$, it is $\sfrac{7}{6}$-consistent and
$\sfrac{7}{6}$-robust.
   At $\theta=0$, the \myemormaxmintwo\
   mechanism
   is 1-consistent and $\sfrac{3}{2}$-robust
   with respect to minimum utility. 
At $\theta=\sfrac{1}{2}$, it is $\sfrac{9}{7}$-consistent and
$\sfrac{9}{7}$-robust.
}
\end{figure}
}

     This mechanism again lets us {\bf trade
     consistency for robustness}.
At $\theta=0$, 
the \myemormaxmintwo\ mechanism 
   is 1-consistent and $\sfrac{3}{2}$-robust
   with respect to minimum utility.
Increasing $\theta$ decreases robusness but
increases consistency.
At $\theta=\sfrac{1}{2}$, it is $\sfrac{9}{7}$-consistent and
$\sfrac{9}{7}$-robust.

\section{Characterization of consistent mechanisms}

You might wonder why we have mostly considered 
mechanisms that are based on (truncated versions of)
\mymaxmin. 
We now give a result that characeterizes strategy proof mechanisms
with predictions which achieve good levels of consistency.
This characterization result demonstrates the central
role played by
the \mymaxmin\ mechanism. We say that a prediction is {\em extreme} iff it is 0 or 1,
and {\em non-extreme} otherwise.
With extreme predictions, there are multiple
mechanisms that are 1-consistent
(e.g. \mymidornearest, \mymedian\ and \mymaxmin).
With non-extreme predictions,
the only mechanism better than 2-consistent
with respect to the maximum distance
is 
\mymaxmin. 

\begin{theorem}
  For non-extreme predictions,
  the only deterministic, strategy proof, anonymous and Pareto
  efficient mechanism using predictions
  that is better than $2$-consistent with respect to the
maximum distance is  \mymaxmin\ which is
$1$-consistent. It is also the only such mechanism
that is $1$-consistent with respect to the minimum utility. 
\end{theorem}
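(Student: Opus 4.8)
The plan is to fix a non-extreme prediction $\pi\in(0,1)$ and observe that, once $\pi$ is held fixed, strategy proofness, anonymity and Pareto efficiency are constraints on the agents' reports alone. Hence by Moulin's characterisation (cited above), the restricted mechanism $f(\cdot,\pi)$ must be a \gmv-mechanism: there are fixed phantom positions $z_1(\pi)\le\cdots\le z_{n-1}(\pi)$, depending only on $\pi$, with $f(x_1,\dots,x_n,\pi)=\mathit{median}(x_1,\dots,x_n,z_1(\pi),\dots,z_{n-1}(\pi))$, i.e.\ the $n$-th smallest of these $2n-1$ numbers. The whole theorem then reduces to showing that being better than $2$-consistent with respect to maximum distance forces $z_k(\pi)=\pi$ for all $k$, since all phantoms equal to $\pi$ is exactly the \mymaxmin\ mechanism with that prediction, and \mymaxmin\ is $1$-consistent with respect to maximum distance (shown in \cite{abgou22} and restated at the start of this section).

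To force the phantoms to $\pi$ I would feed the mechanism instances on which $\pi$ is accurate, namely $x_1=\pi-s$, $x_n=\pi+s$ for small $s\in(0,\min(\pi,1-\pi)]$, so that the optimal maximum distance is $s$ and is attained in $[0,1]$ only at $\pi$; indeed $\max_i|y-x_i|<2s$ holds iff $y\in(\pi-s,\pi+s)$, and equals $2s$ at $y\in\{x_1,x_n\}$. Place the remaining $n-2$ agents either all at $x_1$ or all at $x_n$, and (discarding the finitely many $s$ for which some phantom equals $\pi\pm s$) count positions in the sorted list. For the ``clustered-left'' instance one gets that unless $|\{k:z_k(\pi)<\pi-s\}|=0$ and $|\{k:\pi-s<z_k(\pi)<\pi+s\}|\ge 1$, the median lands on $x_1$ or $x_n$, giving ratio exactly $2$ on that accurate instance and hence ruling out better-than-$2$-consistency. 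Using this and re-running the count on the ``clustered-right'' instance yields $|\{k:\pi-s<z_k(\pi)<\pi+s\}|\ge n-1$, hence $=n-1$; so all phantoms lie in $(\pi-s,\pi+s)$. As this holds for all sufficiently small $s>0$ and there are finitely many phantoms, every $z_k(\pi)=\pi$, so $f(\cdot,\pi)=\mymaxmin(\cdot,\pi)$. Since $\pi$ was an arbitrary non-extreme prediction, $f$ coincides with \mymaxmin\ on all non-extreme predictions, and \mymaxmin\ is $1$-consistent.

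For the minimum-utility claim I would use that the minimum utility equals $1$ minus the maximum distance and that, on a line, the optimal maximum distance is attained only at the midpoint $\tfrac{x_1+x_n}{2}$. Hence a mechanism is $1$-consistent with respect to minimum utility iff on every accurate instance it outputs the midpoint, which is the same condition as $1$-consistency with respect to maximum distance; the identical phantom-placement argument therefore pins the mechanism down to \mymaxmin, and conversely \mymaxmin\ outputs the midpoint (i.e.\ the accurate prediction) and so is $1$-consistent with respect to minimum utility.

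The main obstacle is the combinatorial bookkeeping in the clustered instances: one must handle phantoms that coincide with $x_1$, $x_n$ or $\pi$, track the direction in which the counts $|\{k:z_k(\pi)<\pi-s\}|$ and $|\{k:z_k(\pi)>\pi+s\}|$ move as $s\downarrow 0$, and check the small cases $n=1,2$ directly (where the family of instances degenerates). The qualitative point that makes the argument go through cleanly is that any displacement of even a single phantom away from $\pi$ can be activated by taking an accurate instance whose extreme agents are within distance $s$ smaller than that displacement, producing an approximation ratio of exactly $2$.
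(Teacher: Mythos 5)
Your proposal is correct and follows essentially the same route as the paper: invoke Moulin's characterisation so that, for a fixed prediction $\pi$, the mechanism is a generalized median with $n-1$ phantoms, then exhibit accurate instances on which any phantom off $\pi$ forces the median onto an extreme agent and hence a ratio of $2$ (respectively, a non-midpoint output breaking $1$-consistency in utility). The only difference is cosmetic: the paper builds its witness instances around the largest phantom different from $\pi$ with a three-case analysis, whereas you use the shrinking symmetric family $\{\pi-s,\pi+s\}$ with clustered agents and a counting argument as $s\downarrow 0$, which reaches the same conclusion.
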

\begin{proof}
Any such mechanism is a generalized
median mechanism with $n-1$ phantoms \cite{moulin1980}. 
If the phantoms are all at the predicted facility location, then we have
the \mymaxmin\ mechanism.
Suppose instead that
one or more of those phantoms is not at the predicted and correct facility
location $\pi$.  
And suppose $\pi \leq \sfrac{1}{2}$. There is
a dual argument for $\pi \geq \sfrac{1}{2}$.
By the non-extreme assumption $\pi > 0$. 
%
%
%
%
Let $\rho$ be the largest such phantom different to $\pi$.
There are three cases.
In the first case $\rho < \pi$.
Consider $n-1$ agents at $\rho$ and one at $2\pi -
\rho$. As $0 < \pi \leq \sfrac{1}{2}$, $2\pi -\rho$ is
in $(\pi,1]$.
The optimal facility location is
$\pi$ as required.
However, the mechanism locates the
facility at $\rho$ which is twice 
the optimal maximum
distance from the agent at $2\pi - \rho$.
In the second case $\pi < \rho \leq 2\pi$.
Consider one agent at $2\pi - \rho$, and the other $n-1$ agents
at $\rho$. Note that $2\pi -\rho$ is in $[0,\pi)$. 
The optimal facility location is
$\pi$ as required.
However, the mechanism locates the
facility at $\rho$ which is twice 
the optimal maximum
distance from the agent at $2\pi - \rho$.
In the third case $2\pi < \rho$.
Consider one agent at 0, and the other $n-1$ agents at $2\pi$.
Note that $2\pi$ is
in $(0,1]$.
The facility is located at $2\pi$
at twice the optimal maximum distance
from the agent at 0.
In each case, the mechanism is
2-consistent with respect to the maximum distance. 
By a similar argument, 
\mymaxmin\
is the only such mechanism
that is 1-consistent with respect to the minimum utility.
\end{proof}

  On the (unbounded) real line, predictions
  are never extreme and \mymaxmin\ is the unique
  deterministic mechanism
  achieving better than 2-consistency.

\myOmit{
With respect to the minimum utility objective,
we see {\bf less tension} between consistency
and robustness.

\begin{theorem}
The only deterministic, strategy proof and anonymous
mechanism that is $1$-consistent with respect to the
minimum utility is the \mymaxmin\ mechanism. 
\end{theorem}
\begin{proof}
Any deterministic, strategy proof and anonymous mechanism is a generalized
median mechanism with $n+1$ phantoms (Proposition 2 in \cite{moulin1980}). 
Suppose one of those phantoms is not at the predicted facility location
$\pi$ but at $\rho$. There are two cases. In the first case, $\rho <
\pi$. Suppose all but one agent are at $\rho$ and one is at $2\pi -
\rho$.
The optimal facility location with respect to minimum utility is
$\pi$. However, the generalized median mechanism at best locates the
facility at $\rho$. Therefore it is not $1$-consistent. The second
case is dual. 
\end{proof}
}

\myOmit{
Similar characterization results about consistency hold for the 2-d problem. 
There is again a significant tension between consistency and
robustness.

\begin{theorem}[Adapted from Theorem 3 in \cite{abgou22}]
  In 2-d Euclidean space, the only deterministic,
  strategy proof, unanimous and anonymous
mechanism that is better than $2$-consistent with respect to the optimal
  maximum distance is the \myboundbox\ mechanism which is
  $1$-consistent. 
\end{theorem}

\begin{theorem}
  In 2-d Euclidean space, the only deterministic,
  strategy proof, unanimous and anonymous
mechanism that is $1$-consistent with respect to the optimal
  minimum utility is the \myboundbox\ mechanism. 
\end{theorem}
\begin{proof}
Any deterministic, strategy proof, anonymous,
and unanimous mechanism is a  generalized coordinate-wise median
(GCM) mechanism with $n-1$ phantom points.
By a smilar argument to the 1-d case,
if one of those phantoms is not at the predicted facility location
then the mechanism is not $1$-consistent.
Hence, all phantoms are at the predicted facility location, and
the mechanism is the \myboundbox\ mechanism. 
\end{proof}
}

\section{Conclusions}

\begin{table}[htb]
\hspace{-0.3cm}  \begin{tabular}{|c|c|c|c|c|} \hline
& \multicolumn{2}{|c|} {\bf max distance} & \multicolumn{2}{|c|} {\bf min
                                        utility} \\ \hline
1 facility, determin &  {\bf consis} & {\bf robust} & {\bf
                                                            consis} &
                                                                       {\bf robust} \\ \hline
lower bound & 1  & 2 & 1 & $\sfrac{3}{2}$ \\
    \mymaxmin   & 1 & 2 & {\bf 1}  & $\pmb{\infty}$ \\
    \mymidornearest & {\bf 2} & {\bf 2} & $\pmb{\sfrac{3}{2}}$ & $\pmb{\sfrac{3}{2}}$ \\
    \mymaxmintrunc, $\gamma>0$ & {\bf 2} & {\bf 2} & $\pmb{\frac{(2-\gamma)}{(2-2\gamma)}}$ &
                                                               $\pmb{\frac{(1+\gamma)}{2\gamma}}$ \\ 
    \mymaxmintrunc, $\gamma=\sfrac{1}{2}$ & {\bf 2} & {\bf 2} & $\pmb{\sfrac{3}{2}}$ &
                                                               $\pmb{\sfrac{3}{2}}$ \\ \hline
1 facility, random &  &  &  & \\ \hline
lower bound &  1 & $\sfrac{3}{2}$ & 1 & $\sfrac{4}{3}$ \\
\mylrmormaxmin         &  $1+\delta$ & $2-\delta$ & $\pmb{\frac{1}{(1-\delta)}}$ & $\pmb{\frac{1}{\delta}}$ \\
\mylrmormaxmin, $\delta=\sfrac{1}{2}$         &  $\sfrac{3}{2}$ & $\sfrac{3}{2}$ & $\pmb{2}$ & $\pmb{2}$ \\
    \mylrmtruncormaxmin\             & $\pmb{1+2\delta}$ & {\bf 2} & $\pmb{\frac{2}{(2-\delta)}}$ &
                                                                      $\pmb{\frac{2}{3\delta}}$
                   \\
    \mylrmtruncormaxmin, $\delta=\sfrac{1}{2}$             & $\pmb{2}$ & {\bf 2} & $\pmb{\sfrac{4}{3}}$ &
                                                                      $\pmb{\sfrac{4}{3}}$
                   \\
                   \hline
                   2 facilities, determin &  &  &  & \\ \hline
                   lower bound & 1 & $n-2$ & 1 & $\sfrac{10}{9}$ \\ 
\mymaxmintwo            & {\bf 1} & $\pmb{\infty}$ &
                                        {\bf 1}
                         & $\pmb{\sfrac{3}{2}}$ \\
\mymaxmintwotrunc, $\lambda > 0$              & $\pmb{\infty}$ & $\pmb{\infty}$ &
                                        $\pmb{\frac{(2-\lambda)}{(2-2\lambda)}}$
                         & $\pmb{\frac{(3+2\lambda)}{2(1+2\lambda)}}$ \\
\mymaxmintwotrunc, $\lambda = \sfrac{1}{4}$              & $\pmb{\infty}$ & $\pmb{\infty}$ &
                                        $\pmb{\sfrac{7}{6}}$
                         & $\pmb{\sfrac{7}{6}}$ \\
   \hline 
2 facilities, random &  &  &  & \\ \hline
                   lower bound & 1 & $\sfrac{3}{2}$ & 1 & $\sfrac{10}{9}$ \\ 
    \myemormaxmintwo, $\theta = 0$    & $\pmb{1}$ &  $\pmb{\infty}$ &
                                                   $\pmb{1}$
                         & $\pmb{\sfrac{3}{2}}$ \\ 
    \myemormaxmintwo, $\theta < \sfrac{1}{2}$    & $\pmb{\frac{(3+4\theta)}{3}}$ &  $\pmb{\infty}$ &
                                                   $\pmb{\frac{9}{(9-4\theta)}}$
                         & $\pmb{\frac{9}{2(3+\theta)}}$ \\ 
    \myemormaxmintwo, $\theta = \sfrac{1}{2}$    & $\pmb{\sfrac{5}{3}}$ & $\pmb{\sfrac{5}{3}}$ &
                                                   $\pmb{\sfrac{9}{7}}$
                        & $\pmb{\sfrac{9}{7}}$ \\ \hline
\end{tabular}
\caption{Summary of {\bf consis}tency and
  {\bf robust}ness results with respect to the
optimal  {\bf max}imum {\bf distance} or {\bf min}imum {\bf utility}
for {\bf determin}istic or {\bf random}ized
strategy proof mechanisms. {\bf Bold} font for
results proved here.
}
\end{table}

Our examination of mechanisms for facility location augmented
with predictions of the optimal location demonstrates
that an egalitarian
viewpoint
considering {\em both} the maximum distance any agent travels {\em
  and} the least
utility of any agent provides a more complex picture of 
performance than one considering just maximum distance alone.
Our results are summarized in Table 1.
By considering how mechanisms can perform poorly, we proposed
new deterministic and randomized mechanisms
for locating a single
facility that achieve bounded robustness with respect to both
  maximum distance and minimum utility.
For locating two facilities, 
  we also designed novel mechanisms with predictions
  with bounded robustness and
  consistency.
  These new mechanisms let us smoothly trade consistency
  for robustness.
    A repeated idea to obtain
  good performance was to censor extreme predictions. 
  \myOmit{
  For the 2-d facility location problem,
  we obtained similar results.
  For example, the \myboundbox\ mechanism which applies the \mymaxmin\
  mechanism
  along each dimension 
  is 1-consistent and $1+\sqrt{2}$-robust with respect to the maximum
  distance, and while it is
  1-consistent with respect to the minimum utility, there is no bound
  on its robustness. 
  We proposed instead the \myboundboxtrunc\ mechanism
  which applies the \mymaxmintrunc\
  mechanism
  along each dimension. 
  This again achieves bounded robustness with respect to both
  maximum distance and minimum utility.

  We also provided some characterization results on consistency. For instance,
  on the 1-d problem, the only deterministic, strategy proof
  and anonymous mechanism that is better than 2-consistent with
  respect to the
  maximum distance is the \mymaxmin\ mechanism which is 1-consistent.
  This characterization result lifts in a straight forward way
  to the 2-d problem. 

}

\bibliographystyle{named}
\bibliography{/Users/z3193295/Documents/biblio/a-z2,/Users/z3193295/Documents/biblio/pub2}

\end{document}